\theoremstyle{plain}
\newtheorem{theorem}{Theorem}
\newtheorem{proposition}[theorem]{Proposition}
\newtheorem{lemma}[theorem]{Lemma}
\newtheorem{corollary}[theorem]{Corollary}
\newtheorem{myex}{Example} 
\newenvironment{example}{\begin{myex}\rm}{\hfill$\vartriangle$\end{myex}}
\newtheorem{definition}{Definition}
\renewcommand{\leq}{\leqslant}
\renewcommand{\geq}{\geqslant}
\newcommand{\N}{N} 
\newcommand{\I}{I} 
\newcommand{\sat}{u} 
\newcommand{\prof}{\boldsymbol{u}} 
\newcommand{\reppi}[1]{\bar{\pi}^{(#1)}} 
\newcommand{\oreppi}[1]{\bar{\rho}^{(#1)}} 
\newcommand{\areppi}[1]{\bar{\eta}^{(#1)}} 
\newcommand{\bfx}{\boldsymbol{x}}
\newcommand{\bfy}{\boldsymbol{y}}
\title{Repeated Fair Allocation of Indivisible Items}
\date{\small\textsuperscript{1}The University of Tokyo,\quad
\textsuperscript{2}DBAI, TU Wien\\
\textsuperscript{3}CES, University of Paris 1 Panthéon-Sorbonne}
\author{
  Ayumi Igarashi\textsuperscript{1}\\{\small\href{mailto:igarashi@mist.i.u-tokyo.ac.jp}{\texttt{igarashi@mist.i.u-tokyo.ac.jp}}}
  \and
  Martin Lackner\textsuperscript{2}\\{\small\href{mailto:lackner@dbai.tuwien.ac.at}{\texttt{lackner@dbai.tuwien.ac.at}}}
  \and
  Oliviero Nardi\textsuperscript{2}\\{\small\href{mailto:oliviero.nardi@tuwien.ac.at}{\texttt{oliviero.nardi@tuwien.ac.at}}}
  \and
  Arianna Novaro\textsuperscript{3}\\{\small\href{mailto:arianna.novaro@univ-paris1.fr}{\texttt{arianna.novaro@univ-paris1.fr}}}
}
\newcommand{\BibTeX}{\rm B\kern-.05em{\sc i\kern-.025em b}\kern-.08em\TeX}
\begin{document}
\sloppy  

\maketitle

\begin{abstract}

The problem of fairly allocating a set of indivisible items is a well-known challenge in the field of (computational) social choice. In this scenario, there is a fundamental incompatibility between notions of fairness (such as envy-freeness and proportionality) and economic efficiency (such as Pareto-optimality). However, in the real world, items are not always allocated once and for all, but often repeatedly. For example, the items may be recurring chores to distribute in a household. Motivated by this, we initiate the study of the repeated fair division of indivisible goods and chores,
and propose a formal model for this scenario. In this paper, we show that, if the number of repetitions is a multiple of the number of agents, there always exists 
a sequence of allocations that is proportional and Pareto-optimal. On the other hand, irrespective of the number of repetitions, an envy-free and Pareto-optimal sequence of allocations may not exist.
For the case of two agents, we show that if the number of repetitions is even, it is always possible to find a sequence of allocations that is overall envy-free and Pareto-optimal. We then prove even stronger fairness guarantees, showing that every allocation in such a sequence satisfies some relaxation of envy-freeness.
Finally, in case that the number of repetitions can be chosen freely, we show that envy-free and Pareto-optimal allocations are achievable for any number of agents.
\end{abstract}

\section{Introduction}

In a variety of real-life scenarios, a group of agents has to divide a set of items among themselves.
These items can be desirable (goods) or undesirable (chores) and agents have (heterogeneous) preferences concerning them. In the case of \emph{goods}, we can think, for instance, of employees having to share access to some common infrastructure, like computing facilities.
In the case of \emph{chores}, we can think of roommates having to split household duties or teams having to split admin tasks. We may even have a set of \emph{mixed items}, where the agents may consider some items good and others bad: for instance, when assigning teaching responsibilities, some courses may be desirable to teach for some while undesirable (negative) for others.

The examples above are all instances of problems of fair allocation of indivisible items (see the recent survey by \citet{fairsurvey2022}). 
One of the challenges of fair division problems is that often, given the agents' preferences, it is impossible to find an allocation of the items which is both fair (e.g., no agent envies another agent's bundle) and efficient (e.g., no other allocation would make some agents better off, without making anyone worse off). However, a crucial feature which has so far received little attention, is that many fair allocation problems occur \emph{repeatedly}, in the sense that the same items will need to be assigned multiple times to the agents. For instance, university courses are usually offered every year, computing facilities may be needed daily or weekly by the same teams of employees, and household chores need to be done regularly.

In this paper, we thus focus precisely on those settings where a set of items has to be repeatedly allocated to a set of agents.
This opens the field to new and exciting research directions, revolving around the following central question: 
\begin{quote}
\emph{Can some fairness and efficiency notions 
be guaranteed when taking a global perspective on the overall sequence of repeated allocations? Can we additionally achieve fairness and efficiency guarantees at each individual repetition? }
\end{quote}



\paragraph{Contribution and outline.} Our main contribution is the definition of a new (offline) model for the repeated fair allocation of goods and chores, which we present in Section~\ref{sec:model}. 
In particular, we specify how sequences of allocations will be evaluated with respect to classical axioms of fairness and efficiency: we distinguish between sequences satisfying some axioms \emph{per-round} (i.e., for every allocation composing them), or \emph{overall} (i.e., when considering the collection of all the bundles every agent has received in the sequence).
We consider two general cases: one where the number of repetitions is predetermined and given as part of the input and one where the number of repetitions can be chosen freely based on the instance. In the sequel, we refer to these as the \emph{fixed} and \emph{variable} cases, respectively.



In Section~\ref{sec:general}, we study our model for a fixed number of rounds ($k$) with $n$ agents. We show that: if the number of rounds is a multiple of $n$, we can always guarantee the existence of a sequence of allocations that is envy-free overall in (Proposition~\ref{prop:EF-with-cn-rounds}), and a sequence of allocations that is proportional and Pareto-optimal (Theorem~\ref{thm:generalCasePROPandPO}). For any number $n>2$ of agents and any fixed number $k$ of rounds,
This is essentially optimal in two respects. First, for any number $n \geq 2$ of agents and any fixed number $k$ of rounds which is not a multiple of $n$, a sequence of allocations that is proportional overall may not exist~(Proposition~\ref{prop:noMultipleNoPROP}). 
Moreover, for any number $n>2$ of agents and any fixed number $k$ of rounds, there is an instance of chore allocation where an envy-free and Pareto-optimal sequence of allocations may not exist (Theorem~\ref{thm:no-EF-PO-with-n-greater-than-2}). 

In Section~\ref{sec:two}, we will stay in the fixed-$k$ setting but focus our model on the case of two agents. This scenario is fundamental in fair division, with numerous applications, including inheritance division, house-chore division, and divorce settlements~\cite{BramsFi00,brams2014two,kilgour2018two,housechore2023}. Here, we show that if the number of rounds is even, we can always find a sequence of allocations which is envy-free and Pareto-optimal overall, as well as per-round weak envy-free up to one item (Corollary~\ref{cor:POEFperroundwEF1}). Moreover, for two rounds, we can strengthen the per-round fairness guarantee to envy-freeness up to one item (EF1) (Corollary~\ref{cor:FPT-2-agents:EFPO}). At the cost of sacrificing the efficiency requirement, we also show that we can always find a sequence of allocations which is envy-free and per-round EF1 in polynomial time (Theorem~\ref{thm:2-agents-EF+per-round EF1}). These results turn out to be the best we can hope for. In Proposition~\ref{prop:no-ef1-perround-po}, we show that there is an instance with two agents and $k>2$ rounds where no sequence of allocations is envy-free and Pareto-optimal overall, as well as per-round EF1.

In Section~\ref{sec:variable}, we investigate the model with a variable number of rounds. Within this model, we can in fact achieve overall and per-round fairness guarantees for the case of $n$ agents. Specifically, we establish the existence of a sequence of allocations that satisfies envy-freeness and Pareto-optimality overall, while also meeting the per-round PROP[1,1] criterion (a weakening of proportionality). For scenarios involving goods only or chores only, we can achieve even per-round PROP1. To show this, we establish the connection between our model and the divisible and probabilistic fair division problems. 
An overview of our results can be found in Table~\ref{tab:results}.

\begin{table*}[t]
  \centering
    \begin{tabular}{p{1.5cm} l|ll}
        
        \multicolumn{2}{c|}{condition} & \multicolumn{2}{c}{result}\\
        agents ($n$) & rounds ($k$) & fairness guarantee & reference \\
        \toprule
        $n\geq 2$ & $k\in n\mathbb{N}$ & EF  overall  & \phantom{$\star$}Prop.~\ref{prop:EF-with-cn-rounds}  \\
        $n\geq 2$ & $k\notin n\mathbb{N}$ & \st{PROP overall} & \phantom{$\star$}Prop.~\ref{prop:noMultipleNoPROP}  \\
        $n> 2$ &$k\in \mathbb{N}$ & \st{EF+PO overall} & \phantom{$\star$}Thm.~\ref{thm:no-EF-PO-with-n-greater-than-2}\\
        $n\geq 2$ & $k\in n\mathbb{N}$ & PROP+PO overall & $\star$Thm.~\ref{thm:generalCasePROPandPO} \\
        $n\geq 2$ & variable~$k$ & \makecell[tr]{EF+PO overall, per-round PROP[1,1]\\(PROP1 for only goods/chores)} & $\star$Thm.~\ref{thm:variable:PO+EF+PROP2}\\
        \midrule
        $n=2$ & $k>2$ & \st{EF+PO overall, per-round EF1} & \phantom{$\star$}Prop.~\ref{prop:no-ef1-perround-po}\\
        $n=2$ & $k=2$ & EF+PO overall, per-round EF1 & $\star$Cor.~\ref{cor:FPT-2-agents:EFPO}\\
        $n=2$ & even & EF+PO overall, per-round weak EF1 & $\star$Cor.~\ref{cor:POEFperroundwEF1}\\
        $n=2$ & even & EF overall, per-round EF1 & \phantom{$\star$}Thm.~\ref{thm:2-agents-EF+per-round EF1}\\
        \bottomrule
    \end{tabular}
    
    \caption{Overview of our results regarding envy-freeness (EF), envy-freeness up to one item (EF1), weak EF1, proportionality (PROP), and Pareto-optimality (PO). Crossed-out results cannot be guaranteed under the stated conditions. We write $k\in n\mathbb{N}$ to denote that the number of rounds is a (fixed) multiple of $n$. The main positive results are highlighted with $\star$.}
    \label{tab:results}
\end{table*}

\paragraph{Related work.}

\citet{aziz2022fair} analyze fairness concepts for the allocation of indivisible goods and chores. Based on some of these results, \citet{housechore2023} have also developed an app to help couples divide household chores fairly. Another relevant application is that of the fair allocation of papers to reviewers in peer-reviewed conferences \cite{meir2021market,PayanZick22}, as well as the allocation of students to courses under capacity constraints \cite{othman2010finding}.
These articles focus however on ``one-shot'' (non-repeated) problems. 

Repetitions have been studied in the context of matchings \cite{hosseini2015matching,gollapudi2020almost,caragiannis2022repeatedly}. This line of work
uses models similar to ours. However, results do not carry over due to differences between matchings and arbitrary multi-unit assignments (as we consider).
Repeated fair allocation is also related to probabilistic fair division \cite{budish2013designing,aziz2023best}. We make this connection precise in Section~\ref{sec:variable}.

Various settings fall under the umbrella of \emph{dynamic} or \emph{online} fair decision-making \cite{AleksandrovW20,kohler2014value,kash2014no,freeman2018dynamic,benade2018make,zeng2020fairness}. 
\citet{guo2009competitive} and \citet{cavallo2008efficiency} focus on a repeated setting, where a \emph{single} item must be allocated in every round. We work in an \emph{offline} setting, where agents have static and heterogeneous preferences over multiple items instead of demands, and the sets of agents and items is fixed. \citet{balan2011long} also study repeated allocations of items, but they focus on the average of utilities received by the agents for an allocation sequence.

In the context of elections, a closely related framework is that of \emph{perpetual voting}, introduced by~\citet{lackner2020perpetual}, where the agents participate in repeated elections to select a winning candidate and classical fairness axioms (as well as new ones) are introduced to evaluate aggregation rules with respect to sequences of elections. A similar approach has also been taken to analyze repeated instances of participatory budgeting problems \cite{lackner2021fairness}. 
\citet{freeman2017fair} consider a setting where at each round one alternative is selected, and agents' preferences may vary over time. A related model of simultaneous decisions, closer to fair division, has been studied by \citet{conitzer2017fair}.

\section{The Model}\label{sec:model}

In this section, we present the model used throughout the paper. Furthermore, we recall some familiar concepts from the theory of fair division, and adapt them to our scenario.

We denote by $\N$ a finite set of $n$ \emph{agents}, who have to be assigned a set of $m$ \emph{items} in the finite set $\I$. An allocation $\pi \subseteq \N \times \I$ consists of agent-item pairs $(i,o)$, indicating that agent~$i$ is assigned item $o$. We denote by $\pi_i = \{o\in\I \colon (i, o) \in \pi\}$ the set of items that an agent~$i$ receives in allocation $\pi$. We assume that the allocation must be \emph{exhaustive}: all items must be assigned to some agent (and no two agents may receive the same item). Thus, $\bigcup_{i \in \N}\pi_i = \I$ and, for all distinct $i,j\in\N$, $\pi_i\cap\pi_j=\emptyset$.
We write $[k]$ to denote $\{1, \dots, k\}$. Finally, given a positive integer $\ell$, we denote by $\ell \mathbb{N}$ the set of positive integers multiples of~$\ell$. 

\paragraph{Utilities.} Each agent $i\in\N$ is associated with a (dis)utility function $\sat_i : \I \to \mathbb{R}$, which indicates how much they like or dislike each item. Namely, we consider a setting where each agent may view each item as a good, a chore, or a null item. In particular, we say that an item $o\in\I$ is an \emph{objective good} (resp. \emph{chore} or \emph{null}) if, for all $i\in\N$, $\sat_i(o)>0$ (resp. $\sat_i(o)<0$ or $\sat_i(o)=0$). Otherwise, we say that $o$ is a \emph{subjective} item.

We focus on additive utility functions and with a slight abuse of notation we write $u_i(S)=\sum_{o\in S}u_i(o)$ for the utility that agent~$i$ gets from set $S \subseteq \I$. Thus, the utility of agent~$i$ for an allocation $\pi$ is given by $\sat_i(\pi_i)$.
We denote by $\prof = (\sat_1, \dots, \sat_n)$ the profile of utilities for the agents. 


\paragraph{Fairness.} We introduce several fairness concepts, as defined by \citet{aziz2022fair} and by \citet{fairsurvey2022}. 

A classical notion of fairness is that of \emph{envy-freeness}: an allocation is envy-free if no agent finds that the bundle given to someone else is better than the one they received. 

\begin{definition}[EF]
For agents $\N$, items $\I$, and profile $\prof$, allocation $\pi$ is \emph{envy-free (EF)} if for any $i,j\in \N$, $u_i(\pi_i){\geq}~u_i(\pi_j)$.
\end{definition}

It is easy to see that this notion is too strong and cannot always be achieved (consider the case of one objective good and two agents desiring it). Thus, it has then been relaxed to \emph{envy-freeness up to one item}, which has been further generalized to take into account both goods and chores.

\begin{definition}[EF1]
For agents $\N$, items $\I$, and profile $\prof$, an allocation $\pi$ is \emph{envy-free up to one item (EF1)} if for any $i,j\in \N$, either $\pi$ is envy-free, or there is $o \in \pi_i \cup \pi_j$ such that $u_i(\pi_i \setminus \{o\})\geq u_i(\pi_j\setminus \{o\})$.
\end{definition}

Yet another concept is that of \emph{proportionality} of an allocation, where each agent receives their due share of utility.

\begin{definition}[PROP] 
For agents $\N$, items $\I$, and profile $\prof$, an allocation $\pi$ satisfies \emph{proportionality} if for each agent $i\in\N$ we have $\sat_i(\pi_i) \geq \nicefrac{\sat_i(\I)}{n}$.
\end{definition}
Note that envy-freeness implies proportionality when assuming additive utilities---see, e.g., \citet[Prop. 1]{aziz2022fair}.

Finally, in a similar spirit to EF1, we can weaken proportionality to the notion of \emph{proportionality up to one item} (PROP1) as well as its relaxed version (PROP[1,1]), to guarantee that agents receive a share of utility close to their proportional fair share. The notion of PROP1 has been proposed by \citet{conitzer2017fair} and later extended to the mixed setting by \citet{aziz2022fair}. 

\begin{definition} 
An allocation $\pi$ is said to satisfy 
\begin{itemize}
\item \emph{PROP1} if for each $i\in\N$, $\sat_i((\pi_i \setminus X)\cup Y) \geq \nicefrac{\sat_i(\I)}{n}$ for some $X \subseteq \pi_i$ and $Y \subseteq I \setminus \pi_i$ with $|X \cup Y| \leq 1$.

\item \emph{PROP[1,1]} if for each $i\in\N$, $\sat_i((\pi_i \setminus X)\cup Y) \geq \nicefrac{\sat_i(\I)}{n}$ for some $X \subseteq \pi_i$ and $Y \subseteq I \setminus \pi_i$ with $|X|,|Y| \leq 1$.
\end{itemize}
\end{definition}

We note that PROP1 and PROP[1,1] coincide with each other when all items have non-negative values, since removing an item from an envious agent’s bundle does not increase his or her utility. The same relation holds when all items have non-positive values.  
Note that \citet{ShoshanSeHa23} recently introduced EF[1,1], which is an analogous version of our PROP[1,1] for envy-freeness. 

\paragraph{Efficiency.} Alongside fairness, it is often desirable to distribute the items as efficiently as possible. One way to capture efficiency is via the notion of \emph{Pareto-optimality}, meaning that no improvement to the current allocation can be made without hurting some agent.

\begin{definition}[PO]
For agents $\N$, items $\I$, and profile $\prof$, an allocation $\pi$ is \emph{Pareto-optimal (PO)} if there is no other allocation $\rho$ such that for all $i\in\N$, $\sat_i(\rho_i) \geq \sat_i(\pi_i)$ and for some $j\in\N$ it holds that $\sat_j(\rho_j) > \sat_j(\pi_j)$. If such a $\rho$ exists, we say that it \emph{Pareto-dominates} $\pi$.
\end{definition}


\paragraph{Repeated setting.} In our paper, we will be interested in \emph{repeated} allocations of the items to the agents, i.e., sequences of allocations. We denote by $\reppi{k}=(\pi^1, \pi^2, \dots, \pi^k)$ the repeated allocation of the $m$ items in $\I$ to the $n$ agents in $\N$ over $k$ time periods (or rounds). More formally, we will consider $k$ copies of the set $\I$, such that $\I^1 = \{o^1_1, \dots, o^1_m\},$ $ \dots, \I^k = \{o^k_1, \dots, o^k_m\}$. Then, each $\pi^\ell$ corresponds to an allocation of the items in $\I^\ell$ to the agents in $\N$. For all agents $i \in \N$, all items $o \in \I$ and all $\ell\in[k]$, we let the utility be unchanged: i.e., $u_i(o) = u_i(o^\ell)$. When clear from context, we will drop the superscript $\ell$ from the items.

As a first approach, we will assess fairness over time by considering the global set of items that each agent has received across the $k$ rounds. We denote by $\pi^{\cup k}$ the allocation of $k\cdot m$ (copies of the) items to the $n$ agents, where $\pi^{\cup k}_i = \pi_i^1 \cup \dots \cup \pi_i^k$ for each $i \in \N$. Namely, we consider the allocation $\pi^{\cup k}$ where each agent gets the bundle of (the copies of) items that they have received across all the $k$ time periods in $\reppi{k}$. We say that a sequence of allocations $\reppi{k}$ for some $k$ satisfies an axiom \emph{overall}, if $\pi^{\cup k}$ satisfies it (when clear from context, we omit the term ``overall''). Similarly, we say that $\reppi{k}$ Pareto-dominates $\oreppi{k}$ overall if $\pi^{\cup k}$ Pareto-dominates $\rho^{\cup k}$, and that $\reppi{k}$ is Pareto-optimal overall if no $\oreppi{k}$ dominates it.

Since envy-freeness implies proportionality, we get:
\begin{proposition}
    For additive utilities, if $\reppi{k}$ is envy-free overall, then it is proportional overall.\label{prop:EF-impl-prop}
\end{proposition}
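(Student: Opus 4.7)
The plan is to reduce the statement to the classical single-allocation result that envy-freeness implies proportionality under additive utilities (\citet[Prop.~1]{aziz2022fair}), applied to the aggregate allocation $\pi^{\cup k}$. By the paper's own convention, ``$\reppi{k}$ envy-free overall'' just means that $\pi^{\cup k}$ is envy-free as an allocation of the $k \cdot m$ copies $\I^1 \cup \dots \cup \I^k$ to the $n$ agents in $\N$, and similarly ``proportional overall'' means $\pi^{\cup k}$ is proportional. So in principle one could simply cite the classical result; for self-containedness I would unfold the short calculation.

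First I would write down the hypothesis: for all $i,j \in \N$, $u_i(\pi_i^{\cup k}) \geq u_i(\pi_j^{\cup k})$. Exhaustiveness of each $\pi^\ell$ together with the disjointness of the copies of $\I$ implies that the sets $\pi_j^{\cup k}$ partition $\I^1 \cup \dots \cup \I^k$. Using additivity of $u_i$ and the stipulation $u_i(o^\ell) = u_i(o)$, this gives $\sum_{j \in \N} u_i(\pi_j^{\cup k}) = \sum_{\ell=1}^k u_i(\I^\ell) = k \cdot u_i(\I)$. Summing the envy-freeness inequality over $j \in \N$ then yields $n \cdot u_i(\pi_i^{\cup k}) \geq k \cdot u_i(\I)$, i.e., $u_i(\pi_i^{\cup k}) \geq \nicefrac{k \cdot u_i(\I)}{n}$, which is exactly the proportional-share condition for $\pi^{\cup k}$ since the total utility available to agent~$i$ across all $k$ rounds is $k \cdot u_i(\I)$. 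Hence $\reppi{k}$ is proportional overall.

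There is essentially no hard step here: the proposition is a sanity check that the ``overall'' lifting of EF and PROP interact in the expected way. The only thing to be careful about is the bookkeeping around the proportional fair share: in the aggregate allocation the relevant quantity is $k \cdot u_i(\I)$, not $u_i(\I)$, since the agents are dividing $k$ copies of the item set. Additivity is what makes this rescaling transparent, and it is also what makes the classical EF$\Rightarrow$PROP argument go through verbatim.
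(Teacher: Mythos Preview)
Your proposal is correct and follows exactly the paper's approach: the paper does not give a detailed proof but simply records the proposition as an immediate consequence of the classical one-shot implication EF $\Rightarrow$ PROP for additive utilities (citing \citet[Prop.~1]{aziz2022fair}), applied to the aggregate allocation $\pi^{\cup k}$. Your unfolded calculation is just the explicit version of that remark, with the correct bookkeeping that the total utility in the aggregate instance is $k\cdot u_i(\I)$.
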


As a second approach, we will assess fairness over time by checking whether each repetition satisfies some desirable property. For an axiom $A$, we say that a sequence of allocations $\reppi{k}$ (for some $k$) satisfies \emph{per-round} $A$, if for every $j\in[k]$, the allocation $\pi^j$ satisfies the axiom. For example, an allocation $\reppi{k}=(\pi^1,\dots,\pi^k)$ is \emph{per-round} EF1 if every allocation $\pi^j$ for all $j \in [k]$ is EF1.

\begin{proposition}\label{prop:overall:perround:PO}
If $\reppi{k}$ is Pareto-optimal overall, then it is per-round Pareto-optimal.
\end{proposition}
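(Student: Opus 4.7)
The plan is to argue by contrapositive: assume some round's allocation is not Pareto-optimal and construct an overall Pareto-dominating sequence by locally improving just that round.

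More concretely, suppose for contradiction that $\reppi{k}$ is Pareto-optimal overall but for some $j \in [k]$ the allocation $\pi^j$ is not Pareto-optimal (as a one-shot allocation of $\I^j$). Then there exists some allocation $\sigma$ of $\I^j$ that Pareto-dominates $\pi^j$: that is, $u_i(\sigma_i) \geq u_i(\pi^j_i)$ for every $i \in \N$, with strict inequality for at least one agent. Define the modified sequence $\oreppi{k}$ by $\rho^\ell = \pi^\ell$ for $\ell \neq j$ and $\rho^j = \sigma$.

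The main step is then to verify that $\oreppi{k}$ Pareto-dominates $\reppi{k}$ overall. By additivity, for every agent $i$ we have
\[
u_i(\rho^{\cup k}_i) \;=\; \sum_{\ell \neq j} u_i(\pi^\ell_i) + u_i(\sigma_i) \;\geq\; \sum_{\ell \neq j} u_i(\pi^\ell_i) + u_i(\pi^j_i) \;=\; u_i(\pi^{\cup k}_i),
\]
and the inequality is strict for whichever agent witnessed the strict improvement at round $j$. Hence $\rho^{\cup k}$ Pareto-dominates $\pi^{\cup k}$, contradicting the assumption that $\reppi{k}$ is Pareto-optimal overall.

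There is no genuine obstacle here: the result follows immediately from additivity of utilities, since round $j$ contributes a separable term to the overall utility of every agent, so any one-shot improvement at round $j$ lifts directly to an improvement of the overall allocation. The only thing to be careful about is bookkeeping with the indexed copies $\I^\ell$, to make sure the modified sequence is still a valid sequence of exhaustive allocations (which it is, since we only change the assignment on $\I^j$).
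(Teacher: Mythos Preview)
Your argument is correct and is precisely the natural one: the paper states this proposition without proof (treating it as immediate), and your contrapositive construction---replacing the dominated round $\pi^j$ by a dominating $\sigma$ and using additivity to lift the improvement to $\pi^{\cup k}$---is exactly the intended reasoning.
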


The converse direction does not hold, namely, per-round Pareto-optimality may not imply overall Pareto-optimality, as the following example shows. 

\begin{example}\label{ex:repeat2}
Consider a utility profile over two agents and two items where $\sat_1(o_1) = 4$, $\sat_1(o_2) = 5$, $\sat_2(o_1) = 3$ and $\sat_2(o_2) = 9$. Now consider a four-round allocation where agent~$1$ gets both items in the first and second round, and agent~$2$ gets both items in the remaining two rounds. It is easy to verify that such an allocation sequence is per-round PO. However, it is not PO overall. Indeed, this gives utilities $18$ for agent~$1$ and $24$ for agent~$2$. Instead, an allocation sequence where agent~$1$ always gets $o_1$ and gets $o_2$ once (thus, agent~$2$ takes $o_2$ thrice and no item in one round) yields utilities $21$ to agent~$1$ and $27$ to agent~$2$. Observe that we could obtain a similar example where all items are chores by just multiplying all utilities by $-1$.
\end{example}

On the other hand, for envy-freeness and proportionality, we get the following.

\begin{proposition}\label{prop:overall:perround:EF}
For additive utilities, if $\reppi{k}$ is per-round envy-free (resp. proportional), then it is envy-free (resp. proportional) overall.
\end{proposition}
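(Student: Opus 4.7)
The plan is to exploit the additivity of the utility functions directly: both envy-freeness and proportionality are inequalities between sums of item utilities in different bundles, so summing per-round inequalities over rounds and using additivity immediately yields the overall versions.

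For the envy-freeness part, I would fix two agents $i,j \in \N$ and assume $\reppi{k}=(\pi^1,\dots,\pi^k)$ is per-round EF. Then, for every round $\ell \in [k]$, the hypothesis gives $\sat_i(\pi_i^\ell) \geq \sat_i(\pi_j^\ell)$. By definition, $\pi_i^{\cup k} = \pi_i^1 \cup \dots \cup \pi_i^k$ (a union of disjoint copies of items), so additivity of $\sat_i$ yields $\sat_i(\pi_i^{\cup k}) = \sum_{\ell=1}^k \sat_i(\pi_i^\ell)$, and analogously $\sat_i(\pi_j^{\cup k}) = \sum_{\ell=1}^k \sat_i(\pi_j^\ell)$. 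Summing the per-round inequalities over $\ell$ therefore gives $\sat_i(\pi_i^{\cup k}) \geq \sat_i(\pi_j^{\cup k})$, which is exactly overall envy-freeness.

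For the proportionality part, the same strategy applies. If $\pi^\ell$ is proportional for each $\ell$, then $\sat_i(\pi_i^\ell) \geq \sat_i(\I^\ell)/n = \sat_i(\I)/n$ for every $\ell$ (using the convention $\sat_i(o^\ell) = \sat_i(o)$ introduced in the repeated setting). Summing over rounds and using additivity, $\sat_i(\pi_i^{\cup k}) = \sum_\ell \sat_i(\pi_i^\ell) \geq k\cdot \sat_i(\I)/n$. Since the total pool of items being allocated in $\pi^{\cup k}$ is $\I^1 \cup \dots \cup \I^k$, whose total utility for $i$ is $k\cdot\sat_i(\I)$, the right-hand side is exactly $\sat_i(\I^1\cup\cdots\cup\I^k)/n$, i.e., the proportional share of $i$ in the overall allocation.

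There is essentially no obstacle here: the argument is a direct consequence of additivity and the fact that the bundles $\pi_i^1,\dots,\pi_i^k$ (treated as disjoint copies) partition $\pi_i^{\cup k}$. The only thing to be a little careful about is keeping the distinction between items in $\I$ and their copies in $\I^\ell$ straight, so the normalization $\sat_i(\I^\ell) = \sat_i(\I)$ is used correctly on both sides of the proportional-share inequality.
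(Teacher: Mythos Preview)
Your proposal is correct. The paper states this proposition without proof, treating it as an immediate observation; your argument spells out exactly the obvious additivity calculation the paper leaves implicit, so there is nothing to compare beyond noting that you have filled in the details appropriately.
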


We can see that the converse does not necessarily hold. Indeed, consider a two-agent, two-round scenario with one objective good, where we give the good to one agent in the first round, and to the other agent in the second round. This is envy-free (resp. proportional) overall, but not per-round.

Finally, note that our \emph{overall} results would also hold in the framework of one-shot allocations with $k$ clones of each item. However, the latter model cannot capture the \emph{per-round} results. Thus, our model cannot be reduced to the standard fair division setting with cloned items.

\section{The case of $n$ agents}\label{sec:general}

In this section, we study the possibility of finding fair and efficient sequences of allocations for the general case of any number of agents.

We start by looking at envy-freeness. First of all, observe that, whenever $k$ is a multiple of $n$, then we can always guarantee an EF allocation.
    
\begin{proposition}\label{prop:EF-with-cn-rounds}
If $k\in n\mathbb{N}$, there exists a sequence $\reppi{k}$ which satisfies envy-freeness overall.
\end{proposition}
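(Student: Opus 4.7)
The plan is to exhibit a concrete sequence of allocations that is envy-free overall by exploiting the divisibility $k = cn$ to rotate bundles among the agents so that, after every block of $n$ rounds, each agent has received exactly one copy of every item in $\I$.

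First, I fix any partition of $\I$ into $n$ bundles $B_1, B_2, \dots, B_n$; for instance, one may pick an arbitrary allocation $\pi$ of $\I$ to $\N$ and let $B_i = \pi_i$. Next, I define the first block of $n$ allocations $\pi^1, \dots, \pi^n$ by a cyclic shift: in round $\ell \in [n]$, agent $i \in \N$ receives the items corresponding to bundle $B_{((i+\ell-1)\bmod n)+1}$ (viewed as a subset of $\I^\ell$). Since the shift $\ell \mapsto ((i+\ell-1)\bmod n)+1$ is a bijection of $[n]$ for each fixed $i$, over the block of $n$ rounds agent $i$ receives exactly one copy of each bundle $B_1, \dots, B_n$, hence exactly one copy of every item in $\I$. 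Writing $k = cn$, I repeat this block $c$ times, which defines the full sequence $\reppi{k}=(\pi^1,\dots,\pi^k)$.

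It remains to verify overall envy-freeness. By the construction, for every agent $i \in \N$ the overall bundle $\pi^{\cup k}_i$ consists of exactly $c$ copies of every item in $\I$; in particular, $\pi^{\cup k}_i$ and $\pi^{\cup k}_j$ have the same multiset of items for all $i, j \in \N$. Using additivity, this yields $\sat_i(\pi^{\cup k}_i) = c\cdot \sat_i(\I) = \sat_i(\pi^{\cup k}_j)$ for all $i,j$, so $\pi^{\cup k}$ is envy-free, i.e., $\reppi{k}$ is envy-free overall.

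There is no real obstacle here: the only ingredient is that $n \mid k$, which lets the cyclic-shift construction balance the bundles perfectly. The mild point worth being careful about is bookkeeping with the item copies $\I^1, \dots, \I^k$ in the formal model—one must remember that each round uses a fresh copy of $\I$, so that assembling $c$ shifted blocks is genuinely a valid sequence of exhaustive per-round allocations rather than a single allocation with repeated items.
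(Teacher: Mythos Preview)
Your proof is correct and follows essentially the same cyclic-rotation idea as the paper: start from an arbitrary allocation, shift bundles among agents each round, and use $n\mid k$ to conclude every agent ends up with the same multiset overall. Your write-up is just a more explicit version of the paper's two-sentence argument, with the indexing spelled out and the block repeated $c=k/n$ times.
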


\begin{proof}
Start from an arbitrary initial allocation. Then, let each agent give their current bundle to the next agent in the next allocation (agent $n$ gives theirs to agent~$1$). As each agent~$i$ receives each bundle exactly $\nicefrac{k}{n}$ times, allocation $\pi^{\cup k}$ is envy-free. Thus,  $\reppi{k}$ is envy-free overall.
\end{proof}

Note that this is not always possible if $k$ is not a multiple of $n$, irrespective of $m$.

\begin{restatable}{proposition}{noMultipleNoPROP}
    For every $n\geq 2$, every $k\in\mathbb{N}\setminus n\mathbb{N}$ and every $m\in\mathbb{N}$, an allocation that is proportional overall is not guaranteed to exist, even if the items are all goods or all chores.\label{prop:noMultipleNoPROP}
\end{restatable}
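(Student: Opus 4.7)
The plan is to produce, for each $n\geq 2$, $k\in\mathbb{N}\setminus n\mathbb{N}$, and $m\in\mathbb{N}$, a single worst-case profile for which no sequence of $k$ allocations can be proportional overall. I would take $n$ agents with identical additive utilities: one ``heavy'' item $o_1$ valued at a large parameter $M$ by every agent, and items $o_2,\dots,o_m$ each valued at $1$ by every agent. All items are then objective goods; the chore variant will follow by negating utilities. Write $k = qn + r$ with $0 < r < n$, which is possible precisely because $k \notin n\mathbb{N}$.

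The central counting quantity is $c_i$, the number of rounds in which agent $i$ receives that round's copy of $o_1$; by exhaustiveness, $\sum_{i=1}^n c_i = k$. I would then expand the overall proportionality inequality for agent $i$, bound the total utility contribution from copies of $o_2,\dots,o_m$ received by agent~$i$ by $k(m-1)$, and rearrange to obtain
\[ c_i \;\geq\; \frac{k}{n} - \frac{k(m-1)(n-1)}{nM}. \]
If $M$ is chosen large enough---any $M > k(m-1)(n-1)$ works when $m\geq 2$, and any $M>0$ does when $m=1$---this lower bound is strictly greater than $q$. Since $c_i$ is an integer, it follows that $c_i \geq q+1$ for every agent, and summing over agents yields $\sum_i c_i \geq n(q+1) > qn + r = k$, contradicting $\sum_i c_i = k$.

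For the chore variant, I would simply negate every utility, keeping the same $M$. The analogous manipulation, with inequalities flipped, now yields an \emph{upper} bound $c_i \leq q$ for every agent, so that $\sum_i c_i \leq qn < k$ is again the desired contradiction.

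The only delicate step is the calibration of $M$: with $M$ too small, the ``light'' items could contribute enough utility to let some $c_i$ drop below $\lceil k/n\rceil$ (or exceed $\lfloor k/n\rfloor$ in the chore case), and the pigeonhole collapse disappears. I expect this to be the sole real obstacle; once the bound on $M$ is in place, the remainder is a clean counting argument, and the same instance witnesses the impossibility for both the all-goods and the all-chores settings simultaneously.
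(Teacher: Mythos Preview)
Your proposal is correct and follows essentially the same approach as the paper: both use identical additive utilities with one heavy item of value roughly $k(n-1)(m-1)$ and $m-1$ light items of value~$1$, then argue via pigeonhole that the heavy item cannot be distributed compatibly with proportionality when $n\nmid k$. The only cosmetic difference is that the paper singles out one agent who receives the heavy item at most $\lfloor k/n\rfloor$ times and shows she falls short of her proportional share, whereas you derive $c_i\geq q+1$ for every agent and sum to a contradiction; these are contrapositives of the same counting argument, and your threshold $M>k(m-1)(n-1)$ coincides with the paper's choice $u^\star=k(n-1)(m-1)+1$.
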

\begin{proof}
    Consider the following profile of utilities, where every agent has utility $1$ for all items, except for a special item $o^*$, for which all agents have utility $u^\star=k(n-1)(m-1)+1$.
    Consider any allocation sequence $\reppi{k}$. Since $k$ is not divisible by $n$, it is impossible to give $o^*$ an equal amount of times to all agents (i.e., $\nicefrac{k}{n}$ times). Thus, there must be some agent~$i$ receiving $o^*$ less times than $\nicefrac{k}{n}$. Observe that $i$ can receive $o^\star$ at most $\lfloor\nicefrac{k}{n}\rfloor<\nicefrac{k}{n}$ times. Then, her utility can be at most
    \[
        u_i(\pi^{\cup k}_i) \leq \lfloor\nicefrac{k}{n}\rfloor u^\star + k(m-1).
    \]
    We show that this is less than the proportional fair share due to $i$, that is:
    \begin{align*}
        &\lfloor\nicefrac{k}{n}\rfloor u^\star + k(m-1) < \nicefrac{k}{n}\cdot u_i(\I)\\
        \iff  &\lfloor\nicefrac{k}{n}\rfloor u^\star + k(m-1) < \nicefrac{k}{n}\cdot(u^\star+m-1) \\
        \iff &(\nicefrac{k}{n}-\lfloor\nicefrac{k}{n}\rfloor) u^\star > (k-\nicefrac{k}{n})(m-1) \\
       \iff  &u^\star > \frac{\nicefrac{1}{n}\cdot k(n-1)(m-1)}{\nicefrac{k}{n}-\lfloor\nicefrac{k}{n}\rfloor}.\\
    \end{align*}
    However, since $\nicefrac{k}{n}-\lfloor\nicefrac{k}{n}\rfloor\geq\nicefrac{1}{n}$, the above holds because
    \begin{align*}
        u^\star = k(n-1)(m-1)+1 & > \\
        k(n-1)(m-1) &\geq \frac{\nicefrac{1}{n}\cdot k(n-1)(m-1)}{\nicefrac{k}{n}-\lfloor\nicefrac{k}{n}\rfloor}.
    \end{align*}
    Therefore, $i$ receives less than her proportional fair share.
    Observe, finally, that we could obtain a similar example where all items are chores by just multiplying all utilities by $-1$.
\end{proof}


If we also require Pareto-optimality, we get the following.

\begin{restatable}{theorem}{noEFPOwithNgreaterThanTwo}
    For all $n>2$ and $k\in\mathbb{N}$, there exist a set of items~$\I$ and a profile~$\prof$ such that no allocation is envy-free and Pareto-optimal overall. The set $I$ can be chosen to contain only two objective chores.\label{thm:no-EF-PO-with-n-greater-than-2}
\end{restatable}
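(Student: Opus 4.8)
The plan is to reduce everything to the overall allocation $\pi^{\cup k}$, which, since the $k$ copies of each item are interchangeable, is determined by how many copies of each chore every agent holds. Write the two objective chores as $o_1,o_2$ and, for a candidate sequence, let $a_i$ and $b_i$ be the numbers of copies of $o_1$ and $o_2$ that agent $i$ receives across the $k$ rounds, so $\sum_i a_i=\sum_i b_i=k$. By the definition of overall domination, the reallocations available to a competing sequence $\oreppi{k}$ are exactly the redistributions of these $2k$ copies; hence to certify that an envy-free-overall sequence is \emph{not} Pareto-optimal overall it suffices to exhibit a single Pareto-improving \emph{swap}: take two agents, one holding a copy of $o_1$ and the other a copy of $o_2$, and hand each the other's copy. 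Using additivity and that the items are chores, whenever the two agents' preferences over $o_1,o_2$ are suitably opposed this weakly improves both and strictly improves one. So the whole theorem reduces to building a profile (which may depend on $n$ and $k$) in which every EF-overall count vector $(a_i,b_i)_i$ admits such a swap.

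I would first dispose of the easy regime. If all agents are identical with $u_i(o_1)=u_i(o_2)=-1$, then agent $i$'s utility is $-(a_i+b_i)$, and EF overall forces $a_i+b_i\le a_j+b_j$ for all $i,j$, i.e.\ every agent holds exactly $2k/n$ copies. When $n\nmid 2k$ this is impossible, so \emph{no} EF-overall sequence exists at all and we are done. This already covers every pair $(n,k)$ with $n\nmid 2k$ (in particular all $k<n/2$, where the two chores cannot even reach everyone).

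For the remaining case $n\mid 2k$ I would split the agents into two groups with opposite strict preferences over $o_1,o_2$, on top of a common large disutility term dominating the preference term: $u_i(o_1)=-M$, $u_i(o_2)=-M-1$ for the $j$ agents of the first group (who mildly prefer to hold $o_1$), and $u_i(o_1)=-M-1$, $u_i(o_2)=-M$ for the other $n-j$, with $M>k$. Because the $M$-term dominates, EF again forces every agent to hold the same total $2k/n$, after which the residual terms pin the composition: every $o_1$-preferring agent must hold the maximum number of $o_1$-copies and every $o_2$-preferring agent the minimum. A short counting argument then shows that, unless the parameters conspire to let one group hold its preferred chore \emph{entirely} (a divisibility coincidence of the form $j\mid k$, or integrality of $k(n-2j)/\bigl(n(n-j)\bigr)$), some $o_2$-preferring agent must hold an $o_1$-copy while some $o_1$-preferring agent holds an $o_2$-copy; swapping these strictly helps both, so the allocation is not PO. Tuning the group size $j$ (as a function of $n$ and $k$) rules out this ``hoarding'' escape for all but a sparse set of residues of $k$.

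The main obstacle is precisely this sparse set of residues --- for $n=3$, the $k$ divisible by $6$ --- where the forced equal-count regime is too flexible: there the perfectly specialised allocation (each group handling only its preferred chore, with balanced totals) is simultaneously EF and swap-free, so the domination-by-a-swap argument cannot apply. For these residues I would abandon the large-$M$ regime and instead use an intermediate-magnitude profile engineered so that the (essentially unique) \emph{fractional} EF-and-PO allocation sits at a rational point whose denominator does not divide $k$. Then no integral count vector can equal it, and since the fractional optimum is pinned by tight EF and no-swap constraints, any integral EF allocation must violate a no-swap constraint --- i.e.\ carry a misplaced chore and hence be Pareto-dominated. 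The delicate points I expect to fight with are (i) verifying that \emph{every} integral EF allocation, not merely the fractional optimum, is dominated, which requires the EF polytope to have the claimed single-vertex shape, and (ii) confirming that this intermediate construction uses only two objective chores while dovetailing with the other two regimes to cover every $(n,k)$ with $n>2$.
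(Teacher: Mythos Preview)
Your proposal has a genuine gap that you yourself flag: the ``sparse residue'' case is not proved. The third construction (``intermediate-magnitude profile engineered so that the fractional EF-and-PO allocation has the wrong denominator'') is only a hope, not an argument. In particular, even if the \emph{fractional} EF+PO allocation were unique and non-integral, this would not by itself show that every \emph{integral} EF allocation is Pareto-dominated by another integral allocation, which is what ``PO overall'' requires. You would need exactly the structure you say you ``expect to fight with,'' and you do not supply it.

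There is also a smaller slip in the large-$M$ regime: EF does \emph{not} ``pin the composition'' to the specialised allocation; it only forces common counts within each group and $a_1\ge a_2$. Several EF count-vectors can coexist. Your swap argument still goes through provided \emph{every} EF solution has $b_1>0$ and $a_2>0$, so what you actually need is that \emph{neither} hoarding mode is integral, and then a check that no other EF solution reaches $b_1=0$ or $a_2=0$. You gesture at this but do not verify it.

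The paper avoids all of this with a single, much simpler construction for $k\in n\mathbb{N}$ (the case $k\notin n\mathbb{N}$ being handled by the non-existence of proportional allocations). Instead of two symmetric groups, take \emph{one special agent}: agents $1,\dots,n-1$ have utilities $(-1,-k)$ for $(s,b)$, while agent $n$ has $(-1,-k/n)$. A careful (but elementary) argument then shows that EF forces every agent to hold each chore exactly $k/n$ times. The dominating move is not a one-for-one swap but a \emph{many-for-one} trade: agent $1$ hands a single copy of $b$ to agent $n$ and receives all of agent $n$'s copies of $s$. Agent $1$ strictly gains ($k-k/n>0$), agent $n$ is indifferent ($-k/n+k/n=0$), and everyone else is unchanged. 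This works uniformly for every $k\in n\mathbb{N}$, with no residue analysis and no tuning of group sizes. The idea you are missing is precisely this asymmetry: make one agent tolerate the big chore, so that the only EF allocation is dominated by off-loading that chore onto her.
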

\begin{proof}
    First, observe that, by Propositions~\ref{prop:EF-impl-prop} and \ref{prop:noMultipleNoPROP}, if $k$ is not a multiple of $n$, we are done. Now let $k$ be a multiple of $n$. Consider the following profile, where $s$ and $b$ stand for a \emph{small chore} and a \emph{big chore}, respectively:
    
 \begin{table}[h]
    \begin{center}
    \begin{tabular}{c|cc}
    & $s$ & $b$ \\
    \midrule
    $1$& $-1$&$-k$\\
    $2$& $-1$&$-k$\\
        \vdots & \vdots & \vdots \\
    ${n-1}$& $-1$&$-k$\\
    $n$& $-1$&$-\nicefrac{k}{n}$\\
    \end{tabular}
    \end{center}
    \end{table}

Let $p=\nicefrac{k}{n}$. We claim that in every sequence of allocations that is EF overall, all agents must receive $s$ and $b$ the same amount of times (namely $p$ times). Before proving this claim, observe that any such allocation always has a Pareto-improvement. Indeed, suppose that agent $1$ gives one of her allocations of $b$ to agent~$n$ and agent~$n$ gives  back to agent~$1$ all of her allocations of $s$. Then, agent~$1$ will be happier (she increased her utility by $k-\nicefrac{k}{n}$), and all other agents will be equally happy (for $i\in\{2,\dots,n-1\}$ nothing changes, while $n$ gains $\nicefrac{k}{n}$ but also loses $\nicefrac{k}{n}$ utility). However, agent~$2$, e.g., now envies agent~$1$.

    Let us now prove the claim that all agents receive both items $p$ times. Let $s_i$ and $b_i$ be the number of times where agent~$i$ receive items $s$ and $b$, respectively. We will first prove that, for every $i,j\in[n-1]$, $s_i=s_j$ and $b_i=b_j$ must hold. Consider agents $1$ and $2$, and suppose toward a contradiction that $b_1<b_2$. To guarantee envy-freeness, we must have $-s_1-kb_1=-s_2-kb_2$, which means that $s_1-s_2=k(b_2-b_1)$. 
    Since $s_1,s_2,b_1,b_2\in\{0,\dots,k\}$ and $b_1<b_2$, this holds if and only if $s_1=k$ and $s_2=0$ and $b_2=b_1+1$. Furthermore, note that $s_1=k$ implies that for every other agent $j\neq 1$, $s_j=0$. Thus, to guarantee envy-freeness, we must have that, for all agents $i\in\{3,\dots,n-1\}$, $b_i=b_2=b_1+1$ holds as well (as $s_i=0=s_2$).

    Now, for agent~$2$ not to envy agent $n$, we must have $-b_2 k\geq -b_n k$ and thus $b_2 \leq b_n$. For agent $n$ not to envy agent~$2$, we obtain similarly that $b_2\geq b_n$. Thus $b_1+1=b_2=\cdots=b_n$, which implies: 
\begin{align*}
    &\sum_{i\in[n]}b_i=k,\\
    \implies & b_1+(n-1)(b_1+1)=k,\\
   \implies   &nb_1 +(n-1) =k,\\
   \implies  &p=b_1+\frac{(n-1)}{n}. 
\end{align*}

Since $p,n,b_1\in\mathbb{N}$ and $n>2$, we finally have our contradiction: thus, $b_1\geq b_2$. By a completely symmetrical argument we get that $b_1=b_2$ must hold (which implies $s_1=s_2$ by envy-freeness). Furthermore, we can repeat this argument for every pair of agents $i$ and $j$ with $i,j\in[n-1]$. Thus, every such pair of agents must have $b_i=b_j$ and $s_i=s_j$.

    We are ready to prove that all agents must receive both items the same amount of times, namely $p$ times. First, for agent~$1$ not to envy agent $n$, we must have:
\begin{align*}
    &-s_1-kb_1\geq -s_n-kb_n,\\
    \implies &-s_1-kb_1\geq - (k-(n-1)s_1)-k(k-(n-1)b_1),\\
    \implies &s_1-\frac{k}{n} \leq k(\frac{k}{n}-b_1),\\
    \implies &s_1-p \leq np(p-b_1).
\end{align*}
    With a similar line of reasoning, we can show that, for agent $n$ not to envy agent~$1$, we must have that $s_1-p\geq p(p-b_1)$.   
Additionally, we know that $(n-1)s_1\leq k$ (as we cannot assign $s$ to the agents in $[n-1]$ more than $k$ times), which gives $s_1\leq p\frac{n}{n-1}$. We can now show the following (given $p\geq 1$):
\begin{align*}
    &s_1-p\geq p(p-b_1),\\
    \iff  &(p-b_1) \leq \frac{s_1}{p}-1\leq \frac{n}{n-1}-1=\frac{1}{n-1}\leq\frac{1}{2}. 
\end{align*}

Since $p-b_1\in\mathbb{Z}$, $p>b_1$ would imply $1\leq p-b_1\leq \frac{1}{2}$, a contradiction: hence, $p\leq b_1$. By the above, we also know that $p(p-b_1) \leq s_1-p\leq  np(p-b_1)$, and thus:
\begin{align*}
    &p(p-b_1) \leq np(p-b_1),\\
    \implies & (p-b_1)\leq n(p-b_1),
\end{align*}
which implies $p \geq b_1$. 
    Hence, $b_1=p=\nicefrac{k}{n}$. Finally:
    \begin{align*}
        p(p-b_1)=0\leq s_1-p \leq np(p-b_1)=0.
    \end{align*}
   This gives $s_1=b_1=p=\nicefrac{k}{n}$, which in turn yields $s_1=b_1=\dots=s_n=b_n=\nicefrac{k}{n}$. This concludes the proof.
\end{proof}

The following example shows that, even if $n=k=3$ and all items are good, there are utility profiles with no EF and PO allocations.

\begin{example}\label{ex:n_k_3_noEF_PO}

Consider the following profile for three agents $\N = \{1,2,3\}$ expressing their utilities over two goods $\I = \{o_1, o_2\}$:

    \begin{table}[h]
    \begin{center}
    \begin{tabular}{c|cc}
    & $o_1$ & $o_2$\\
    \midrule
    $1$& $1$&$2$\\
    $2$& $1$&$2$\\
    $3$& $1$&$1$\\
    \end{tabular}
    \end{center}
    \end{table}
    
    Assume $k=3$. One can verify that, in order to be achieve envy-freeness, all agents must receive both goods exactly once. This gives total utilities $3$, $3$ and $2$ for agents $1$, $2$ and $3$, respectively. However, any such allocation sequence is dominated by a sequence where agent $1$ gets $o_2$ twice, agent $2$ gets both items once, and agent $2$ receives $o_1$ twice. Indeed, this gives total utilities $4$, $3$ and $2$ for agents $1$, $2$ and $3$, respectively. However, here agent $2$ envies agent $1$. Thus, no envy-free and PO allocation exists for $k=3$.
\end{example}

In light of Theorem~\ref{thm:no-EF-PO-with-n-greater-than-2}, envy-freeness seems too strong of a requirement, even in the repeated setting. However, we can at least always find a proportional and Pareto-optimal sequence of allocations.

\begin{theorem}
    For every $n\geq2$ and $k\in n\mathbb{N}$, an allocation sequence that is proportional and Pareto-optimal overall always exists, and can be computed in time $O(n^{mk}\cdot m)$.\label{thm:generalCasePROPandPO}
\end{theorem}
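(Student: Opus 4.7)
The plan is to start from the envy-free round-robin sequence of Proposition~\ref{prop:EF-with-cn-rounds} and then move to a Pareto-optimal sequence that (weakly) Pareto-dominates it. The key observation is that proportionality overall is preserved under Pareto-improvements: if $\reppi{k}$ is proportional overall and $\oreppi{k}$ Pareto-dominates $\reppi{k}$ overall, then each agent's total utility in $\oreppi{k}$ is at least her total utility in $\reppi{k}$, which is already at least $\nicefrac{k\cdot u_i(\I)}{n}$, the proportional fair share for the $k$-fold instance. Hence any Pareto-improvement of a PROP sequence is again PROP.

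Concretely, I would proceed in three steps. First, construct the initial sequence $\reppi{k}_0$ by fixing an arbitrary allocation $\pi^1$ of $\I$ and cyclically rotating bundles among the $n$ agents over the $k$ rounds; by Proposition~\ref{prop:EF-with-cn-rounds} this is envy-free, and by Proposition~\ref{prop:EF-impl-prop} it is also proportional overall (in fact, each agent receives exactly $\nicefrac{k\cdot u_i(\I)}{n}$). Second, consider the finite set $S$ of all sequences $\oreppi{k}$ that either equal $\reppi{k}_0$ or Pareto-dominate it overall; since the overall utility vectors form a finite set and Pareto-dominance is a strict partial order on them, $S$ has a maximal element $\reppi{k}^*$ under overall Pareto-dominance. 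Third, argue that $\reppi{k}^*$ is simultaneously PROP and PO overall: PROP follows from the observation above (it weakly dominates $\reppi{k}_0$), and PO follows because any hypothetical dominating sequence would by transitivity also dominate $\reppi{k}_0$, hence lie in $S$, contradicting maximality of $\reppi{k}^*$.

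For the algorithmic claim, I would observe that an allocation sequence is determined by assigning each of the $mk$ item-copies to one of $n$ agents, so the search space has size $n^{mk}$. Enumerating these sequences and maintaining, along the enumeration, a current best candidate that weakly dominates $\reppi{k}_0$ (updating only when a strictly dominating candidate is found, with the final candidate guaranteed to be maximal in $S$) gives the stated $O(n^{mk}\cdot m)$ running time when utility updates are done incrementally as items are reassigned.

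The main conceptual step is the preservation-under-improvement argument, which is very short; the only thing to be careful about is that PROP is defined relative to $u_i(\I)$ in the one-shot setting but becomes $\nicefrac{k\cdot u_i(\I)}{n}$ for the overall allocation $\pi^{\cup k}$ (since the $k$ copies of $\I$ contribute $k\cdot u_i(\I)$ to the total), so the round-robin sequence meets the correct threshold exactly. I do not expect any real obstacle: the proof is essentially a ``start feasible, improve to optimal'' argument, with finiteness of the allocation space ensuring termination at a PO sequence.
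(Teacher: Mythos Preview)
Your proposal is correct and follows essentially the same approach as the paper: start from the envy-free (hence proportional) rotation sequence of Proposition~\ref{prop:EF-with-cn-rounds}, use that proportionality is preserved under Pareto-improvement, and then enumerate all $n^{mk}$ sequences while maintaining a running candidate that is updated only on strict domination. The paper merges the existence and algorithmic arguments into a single pass, whereas you separate them (existence via a maximal element of the finite partially ordered set~$S$, then the enumeration for the runtime bound), but this is a purely stylistic difference.
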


\begin{proof}
    Consider any proportional sequence of allocations $\reppi{k}$ (which must exist, by Propositions~\ref{prop:EF-impl-prop} and~\ref{prop:EF-with-cn-rounds}). Clearly, any Pareto-improvement over $\reppi{k}$ must also be proportional. We get the following algorithm:

    \begin{enumerate}
    \item Initialize a variable $r$ as $\reppi{k}$, as defined previously.
    \item Iterate over all possible $k$-round allocations $\oreppi{k}$, and do: 
    \begin{itemize}
    \item  if $\oreppi{k}$ Pareto-dominates $r$, set $r$ to $\oreppi{k}$.
    \end{itemize}
    \item Return $r$.
    \end{enumerate}

    This algorithm runs in $O(n^{mk}\cdot m)$. Indeed, every iteration at Step $2$ of the algorithm requires time $O(m)$, as we just need to sum up the utilities for the items received by each agent to compare the total utilities of $r$ and $\oreppi{k}$. Furthermore, there are $n^{mk}$ possible allocations, as for any round $j\in [k]$ and every object $o\in\I$, we could assign object $o$ in round $j$ to $n$ different agents. It remains to be shown that the algorithm is correct. 

    Let us call $\areppi{k}$ the allocation in $r$ returned by the algorithm. As argued above, $\areppi{k}$ must be proportional overall, since it is either equal to $\reppi{k}$ or a Pareto-improvement of it. Thus, suppose toward a contradiction that $\areppi{k}$ is not PO. Then, there must be some Pareto-optimal $\oreppi{k}$ dominating $\areppi{k}$ that is encountered before $\areppi{k}$ in the iteration---since, if it is encountered after, the variable $r$ would be updated to $\oreppi{k}$. But since Pareto-dominance is transitive, we know that $\oreppi{k}$ Pareto-dominates all the allocations corresponding to the values that the variable $r$ took before being updated to $\areppi{k}$, and hence $r$ is updated to $\oreppi{k}$ during iteration. However, this means that $r$ cannot be updated to $\areppi{k}$, as $\oreppi{k}$ dominates it: a contradiction. This concludes the proof.
\end{proof}

Note that the existence guarantee in Theorem~\ref{thm:generalCasePROPandPO} is tight because of Proposition~\ref{prop:noMultipleNoPROP}. Moreover, we can define an integer linear program (ILP) to compute such a proportional and PO allocation (Figure~\ref{fig:ilp-n-agents}).

\begin{figure}[bht]
    \centering
    \noindent
        \parbox{0.8 \columnwidth}{%
            \begin{align*}
                \text{\textbf{maximize} \ } &\sum_{i\in N}\sum_{o\in\I} u_i(o) \cdot x_o^i &\\
                \text{\textbf{subject to}: \ }& x_o^i \in \{0,\dots,k\} && \text{ for }o\in\I,i\in N &\\
                & \sum_{i\in N} x_o^i = k  && \text{ for }o\in\I &\\
                & \sum_{o\in\I} u_i(o)\cdot x_o^i\geq\nicefrac{k}{n}\cdot u_i(\I) &&  \text{ for }i\in N &
        \end{align*}}
    \caption{An ILP for finding a proportional and PO allocation sequence for $n$ agents.
        In this ILP, we maximize the social welfare and thus guarantee PO.
        Variable $x_o^i$ indicates how often agent~$i$ receives item~$o$.
    If $k\notin n\mathbb{N}$, this ILP may be unsatisfiable due to Proposition~\ref{prop:noMultipleNoPROP}.}\label{fig:ilp-n-agents}
\end{figure}

Based on this, we present some fixed-parameter tractability results. A problem is said to be fixed-parameter tractable (FPT) with respect to a parameter $\tau$ if each instance $H$ of this problem can be solved in time $f(\tau){poly}(|H|)$ for some function $f$.
In practical applications, relevant parameters are the number of agents $n$, the number of items $m$, and the maximum value of an item $u_{max}=\max_{i \in N, o \in \I}u_{i}(o)$. Invoking a result of \citet{bredereckEC2019} for one-shot fair division, as well as our ILP program (Figure~\ref{fig:ilp-n-agents}) we can get fixed-parameter tractability with respect to (combinations of) these parameters.

\begin{theorem}
    For every $n\geq2$ and $k\in n\mathbb{N}$, finding a proportional and Pareto-optimal allocation sequence can be done in FPT time with respect to $n+m$, and when each utility $u_i(o)$ is an integer in FPT time with respect to $n+u_{max}$.\label{thm:n+umax+FPT-any-agent}
\end{theorem}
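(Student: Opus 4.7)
The plan is to prove the two FPT claims separately, using two different tools: the ILP from Figure~\ref{fig:ilp-n-agents} for the parameter $n+m$, and a reduction to one-shot fair division with cloned items for the parameter $n+u_{\max}$.

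For the parameter $n+m$, the approach is to simply solve the ILP in Figure~\ref{fig:ilp-n-agents} via Lenstra's algorithm. This ILP has exactly $nm$ integer variables (one $x_o^i$ per agent-item pair), a linear objective, and $m+n$ linear constraints. Since $k \in n\mathbb{N}$, the ILP is feasible (by Theorem~\ref{thm:generalCasePROPandPO}). By Lenstra's theorem and its extensions for linear optimization, an ILP with $p$ variables can be solved in time $f(p) \cdot \mathrm{poly}(L)$, where $L$ denotes the input length. As $nm \leq (n+m)^2$, this yields an FPT algorithm in $n+m$. The correctness of the ILP---namely, that an optimal solution is always proportional and Pareto-optimal overall---follows because the proportionality constraints enforce proportionality, and maximizing utilitarian social welfare guarantees Pareto-optimality (any Pareto-improvement would strictly increase the objective).

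For the parameter $n+u_{\max}$ with integer utilities, the plan is to reduce to a one-shot fair division instance. As noted at the end of Section~\ref{sec:model}, overall properties in our repeated setting correspond exactly to the same properties in a one-shot instance with $k$ clones of each item. In this reformulated instance, although the number of items grows to $km$, all clones of the same original item share the same utility vector in $\{-u_{\max}, \dots, u_{\max}\}^n$. Hence, the number of distinct item types is bounded by $(2u_{\max}+1)^n$, a function depending only on $n+u_{\max}$. The result of \citet{bredereckEC2019} for one-shot fair division---which gives FPT algorithms for computing proportional and Pareto-optimal allocations parameterized by $n+u_{\max}$ via high-multiplicity ILP techniques---can then be applied directly, since items of equal type can be handled in aggregate.

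The main obstacle is the careful check that the one-shot-with-clones reformulation preserves both proportionality and Pareto-optimality in their \emph{overall} forms, and that Bredereck et al.'s framework indeed handles the specific combination of PROP and PO under integer utilities with both goods and chores. Both facts are essentially available in the paper and the cited literature: the clone correspondence is explicitly noted in Section~\ref{sec:model}, and the one-shot FPT result covers exactly this combination of axioms. Beyond verifying these points, the proof amounts to invoking existing tools.
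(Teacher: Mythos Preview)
Your argument for the parameter $n+m$ is the same as the paper's: both solve the ILP of Figure~\ref{fig:ilp-n-agents} via Lenstra's algorithm, using that it has $nm$ integer variables.

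For the parameter $n+u_{\max}$ you take a different route. The paper invokes \citet{bredereckEC2019} only for computing a PO \emph{maxmin} allocation---one maximizing $\min_{i\in N} u_i(\pi^{\cup k}_i)$---and tacitly treats the result as proportional. You instead appeal to their high-multiplicity framework directly for the PROP+PO combination on the cloned one-shot instance, noting that the number of item types stays bounded by $(2u_{\max}+1)^n$ regardless of~$k$. Your route is actually the sounder one: a PO maxmin allocation need not be proportional even when a proportional allocation exists. For instance, with $n=k=2$ and items $a,b$ where $u_1(a)=60$, $u_1(b)=40$, $u_2(a)=3$, $u_2(b)=7$, the unique PO maxmin outcome on the two-copy instance gives agent~$1$ just one copy of $a$ (utility vector $(60,17)$), which violates her proportional share of~$100$. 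So the paper's argument leaves a gap that your direct invocation of PROP+PO sidesteps; the price is that you must check that the framework of \citet{bredereckEC2019} indeed handles PROP together with PO for mixed goods and chores in high-multiplicity encoding, which you correctly flag as the remaining dependency.
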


\begin{proof}
Recall that such a proportional and PO allocation sequence is guaranteed to exist (Theorem~\ref{thm:generalCasePROPandPO}). Fixed-parameter tractability in $n+m$ follows from the ILP in Figure~\ref{fig:ilp-n-agents}, together with the fact that solving an ILP is FPT in the number of variables ($nm$) \citep{lenstra1983integer}. Next, \citet{bredereckEC2019} showed that the problem of computing a PO \emph{maxmin allocation} $\pi$ that maximizes the minimum utility $\min_{i \in N}u_i(\pi_i)$ can be done in FPT time with respect to $n+u_{max}$ when each utility $u_i(o)$ is an integer. Thus, one can compute an allocation $\pi^{\cup k}$ that maximizes the minimum utility $\min_{i \in N}u_i(\pi^{\cup k}_i)$ in FPT time with respect to $n+u_{max}$.
\end{proof}

\section{The case of two agents}\label{sec:two}
In this section, we consider the case of repeated fair division among two agents. Several authors have focused on this special but important case (see, e.g., the papers by \citet{brams2012undercut}, \citet{brams2014two}, \citet{kilgour2018two}, \citet{aziz2022fair}, \citet{ShoshanSeHa23}, and the references therein). This captures some practically relevant problems, for example, house chore division among couples~\cite{housechore2023}, or divorce settlements~\cite{BramsTaylor1996}.

A first intuitive idea for the repeated setting could be to apply twice the \emph{Generalized Adjusted Winner} algorithm introduced by \citet{aziz2022fair}, as in the one-shot setting it returns efficiently a PO and EF1 allocation of the goods and chores, by alternating the roles of the winner and the loser.
However, this approach may fail. The following example shows that, when computing EF and PO allocations for two agents, the approached based on the Generalized Adjusted Winner (GAW) algorithm does not work.

\begin{example}\label{ex:adjusted}
Consider the following profile for two agents $\N = \{1,2\}$ expressing their utilities over three goods $\I = \{o_1, o_2, o_3\}$:

    \begin{table}[h]
    \begin{center}
    \begin{tabular}{c|ccc}
    & $o_1$ & $o_2$ & $o_3$\\
    \midrule
    $1$& $4.5$&$3$&$7$\\
    $2$& $9$&$5$&$10$
    \end{tabular}
    \end{center}
    \end{table}
    
    Consider now a two-round sequence $\reppi{2} = (\pi^1, \pi^2)$ where we apply the GAW algorithm by choosing agent~$1$ as the loser in the first round and agent~$2$ as the loser in the second round. We then get allocation $\pi^1$ with $\pi^1_1=\{o_3\}$ and $\pi^1_2=\{o_1,o_2\}$, and an allocation $\pi^2$ with $\pi^2_1=\{o_2,o_3\}$ and $\pi^2_2=\{o_1\}$. Both $\pi_1$ and $\pi_2$ are, by themselves, EF1 and PO (as we used GAW each time). However, $\reppi{2}$ is not EF, since agent~$2$ envies agent~$1$. By multiplying all the utilities by $-1$, we obtain an analogous example for the case where all items are chores.
    \end{example}

Despite this, we will show that, whenever the number of rounds is even, we can still always find an allocation that is PO and EF overall. However, we lose the guarantee of an efficient computation. First, we need the following fact:

\begin{restatable}{proposition}{twoEFPROPrelation}
    If $n=2$, for additive utilities, proportionality implies envy-freeness.\label{prop:2n-prop-impl-EF}
\end{restatable}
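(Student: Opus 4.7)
The proof plan is a short, direct calculation exploiting the fact that with only two agents, the bundles of the two agents partition the item set $\I$.

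First I would unpack the definitions. Fix an (exhaustive) allocation $\pi$ for agents $\N=\{1,2\}$ and assume it is proportional, so $\sat_i(\pi_i)\geq \sat_i(\I)/2$ for $i\in\{1,2\}$. Since $\pi_1\cup\pi_2=\I$ and $\pi_1\cap\pi_2=\emptyset$, additivity of $\sat_i$ yields $\sat_i(\I)=\sat_i(\pi_i)+\sat_i(\pi_j)$ for $j\neq i$.

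Next, I would substitute this identity into the proportionality inequality. For each $i\in\{1,2\}$ with $j$ the other agent, proportionality gives
\[
\sat_i(\pi_i) \;\geq\; \tfrac{1}{2}\bigl(\sat_i(\pi_i)+\sat_i(\pi_j)\bigr),
\]
which rearranges to $\sat_i(\pi_i)\geq \sat_i(\pi_j)$. Applying this for $i=1$ (with $j=2$) and $i=2$ (with $j=1$) covers all ordered pairs of distinct agents, yielding envy-freeness.

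There is essentially no obstacle: the argument uses only additivity, exhaustiveness, and the fact that $n=2$ so each agent has a single ``other'' agent whose bundle is exactly the complement in $\I$. The only subtlety worth a sentence of remark is that the claim does \emph{not} rely on the items being goods, chores, or mixed; the signs of $\sat_i(o)$ play no role, since we only manipulate sums. In particular, the result applies both to one-shot allocations and, when used in the repeated setting of Section~\ref{sec:two}, to the overall allocation $\pi^{\cup k}$ viewed as an allocation of $k$ copies of each item to two agents.
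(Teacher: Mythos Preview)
Your proof is correct and follows essentially the same approach as the paper: both use the partition identity $\sat_i(\I)=\sat_i(\pi_i)+\sat_i(\pi_j)$ (the paper writes it in the equivalent form $\sat_1(\pi^{\cup k}_1)=k\cdot\sat_1(\I)-\sat_1(\pi^{\cup k}_2)$ for the overall allocation) and substitute into the proportionality inequality to obtain $\sat_i(\pi_i)\geq\sat_i(\pi_j)$. The only cosmetic difference is that the paper phrases the argument directly for the overall allocation $\pi^{\cup k}$ with the factor $k$, while you treat a generic one-shot allocation and then note that it applies to $\pi^{\cup k}$.
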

\begin{proof}
    Let $\N=\{1, 2\}$ and consider some proportional allocation $\reppi{k}$. Then we get:
    \begin{align*}
        &u_1(\pi^{\cup k}_1) \geq \nicefrac{k}{2}\cdot u_1(\I) \\
        \implies &k\cdot u_1(\I)-u_1(\pi^{\cup k}_2) \geq \nicefrac{k}{2}\cdot u_1(\I) \\
        \implies  &u_1(\pi^{\cup k}_2) \leq \nicefrac{k}{2}\cdot u_1(\I) \leq u_1(\pi^{\cup k}_1) 
    \end{align*}
    Thus, agent~$1$ does not envy agent~$2$. A symmetrical argument shows that $2$ does not envy agent~$1$. Hence $\reppi{k}$ is envy-free overall.
\end{proof}

Now, we can show the following.

\begin{theorem}\label{thm:EFPO}
    If $n=2$ and $k\in 2\mathbb{N}$, an allocation sequence that is envy-free and Pareto-optimal overall always exists, and can be computed in time $O(m\cdot (k+1)^m)$.\label{thm:2andEvenEFandPO}
\end{theorem}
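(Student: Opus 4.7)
The existence part follows almost for free from the tools already developed. By Theorem~\ref{thm:generalCasePROPandPO}, since $n=2$ and $k\in 2\mathbb{N} = n\mathbb{N}$, there exists an allocation sequence $\reppi{k}$ that is proportional and Pareto-optimal overall. By Proposition~\ref{prop:2n-prop-impl-EF}, for two agents and additive utilities, proportionality overall implies envy-freeness overall. Hence the same $\reppi{k}$ is envy-free and Pareto-optimal overall.

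The work is in obtaining the $O(m(k+1)^m)$ running time, since the algorithm from Theorem~\ref{thm:generalCasePROPandPO} is more expensive ($O(n^{mk}\cdot m) = O(2^{mk}\cdot m)$). The plan is to exploit the observation that, for two agents, the overall allocation $\pi^{\cup k}$ is determined by a single count vector $c=(c_o)_{o\in\I}\in\{0,1,\ldots,k\}^m$, where $c_o$ records how many times item $o$ is assigned to agent~$1$ across the $k$ rounds; agent~$2$ necessarily receives $o$ in the remaining $k-c_o$ rounds. Every $\pi^{\cup k}$ corresponds to exactly one such vector, and conversely every count vector is realizable as a sequence, e.g., by giving $o$ to agent~$1$ in rounds $1,\dots,c_o$ and to agent~$2$ in rounds $c_o+1,\dots,k$. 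Thus overall-utility-based properties such as proportionality, envy-freeness, and Pareto-dominance can be checked directly on count vectors.

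Given this, I would enumerate all $(k+1)^m$ count vectors, computing the two overall utilities $\sum_{o\in\I} c_o u_1(o)$ and $\sum_{o\in\I} (k-c_o) u_2(o)$ in $O(m)$ time per vector. Following the same iterative Pareto-improvement scheme as in Theorem~\ref{thm:generalCasePROPandPO}, maintain a current best vector $r$, initialized to any proportional count vector (existence is guaranteed by Propositions~\ref{prop:EF-with-cn-rounds} and~\ref{prop:EF-impl-prop} combined with the encoding above), and update $r$ whenever a traversed count vector Pareto-dominates it. At the end, convert $r$ into a sequence of $k$ allocations by the round-assignment recipe above. Total time is $O(m(k+1)^m)$.

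For correctness, the argument mirrors the one in Theorem~\ref{thm:generalCasePROPandPO}: the output vector is either the initial proportional one or a Pareto-improvement of it, and it must itself be Pareto-optimal, since any dominating vector encountered later would trigger an update, while any dominating vector encountered earlier would (by transitivity of Pareto-dominance) have updated $r$ to something that still dominates the final output, contradicting the update rule. Hence the returned sequence is proportional and Pareto-optimal overall, and by Proposition~\ref{prop:2n-prop-impl-EF} it is also envy-free overall. The main conceptual step is really just the encoding of sequences by count vectors; everything else parallels the general algorithm.
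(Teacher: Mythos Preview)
Your proof is correct and takes essentially the same approach as the paper: existence via Theorem~\ref{thm:generalCasePROPandPO} plus Proposition~\ref{prop:2n-prop-impl-EF}, and the improved running time via the observation that for $n=2$ an overall allocation is determined by the count vector $(c_o)_{o\in\I}\in\{0,\dots,k\}^m$, giving $(k+1)^m$ candidates to enumerate in the Pareto-improvement loop. The paper's own proof is terser but makes exactly these two points.
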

\begin{proof}
 From the algorithm in the proof of Theorem~\ref{thm:generalCasePROPandPO} and from Proposition~\ref{prop:2n-prop-impl-EF} we get an envy-free and Pareto-optimal allocation. Since $n=2$, there are $(k+1)^m$ possible allocations (up to symmetry-breaking), as each agent receives each of the $m$ items either $0$, $1$, \dots, or $k$ times. Thus, we obtain a run time of $O(m\cdot (k+1)^m)$.
\end{proof}

For two agents, as an immediate consequence of Theorem~\ref{thm:n+umax+FPT-any-agent} and Proposition~\ref{prop:2n-prop-impl-EF}, we get the following.

\begin{proposition}\label{prop:FPT-2-agents}
    If $n=2$ and $k\in 2\mathbb{N}$, an allocation which is Pareto-optimal and envy-free can be computed in FPT time with respect to $m$, and when each utility $u_i(o)$ is an integer in FPT time with respect to $u_{max}$.
\end{proposition}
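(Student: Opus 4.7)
The plan is to simply combine the two cited results. By Theorem~\ref{thm:n+umax+FPT-any-agent}, I can compute a proportional and Pareto-optimal allocation sequence in FPT time with respect to $n + m$, and, when utilities are integers, in FPT time with respect to $n + u_{max}$. Since here we fix $n = 2$, the parameter $n$ becomes a constant, so these bounds collapse to FPT time with respect to $m$ and with respect to $u_{max}$ respectively.

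The second step is to upgrade the fairness guarantee from proportionality to envy-freeness. This is where Proposition~\ref{prop:2n-prop-impl-EF} comes in: for $n=2$ with additive utilities, any proportional allocation (in particular, any proportional overall sequence) is envy-free. Hence the allocation sequence returned by the algorithm underlying Theorem~\ref{thm:n+umax+FPT-any-agent} is automatically envy-free and Pareto-optimal overall, with no additional computational overhead.

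I do not foresee any real obstacle here: the result is genuinely a corollary, obtained by specializing $n = 2$ and invoking the implication from proportionality to envy-freeness. The only thing worth making explicit in the write-up is that setting $n$ to the constant $2$ absorbs it into the hidden constants of the FPT bound, so parametrizing by $m$ alone (respectively $u_{max}$ alone) is indeed sufficient.
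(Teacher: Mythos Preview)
Your proposal is correct and matches the paper's own argument exactly: the paper simply states this proposition as an immediate consequence of Theorem~\ref{thm:n+umax+FPT-any-agent} and Proposition~\ref{prop:2n-prop-impl-EF}, which is precisely what you do.
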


Theorem~\ref{thm:EFPO} provides strong overall fairness and efficiency guarantees, but it does not ensure per-round fairness. Indeed, an envy-free and PO allocation that is per-round EF1 may not exist, even if all items are goods or chores.

\begin{restatable}{proposition}{noEFonePerroundPO}
If $n=2$ and $k>2$, an allocation that is per-round EF1, Pareto-optimal overall, and envy-free is not guaranteed to exist, even when all items are goods or chores.\label{prop:no-ef1-perround-po}
\end{restatable}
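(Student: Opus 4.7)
I would prove this by exhibiting one two-item profile that is a counterexample for every $k>2$, together with its chores analogue.

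Take $\I=\{o_1,o_2\}$ with $u_1(o_1)=2$, $u_1(o_2)=1$, $u_2(o_1)=3$, $u_2(o_2)=1$, and let $x_1$, $y_1$ denote the total copies of $o_1$, $o_2$ that agent~$1$ receives across the $k$ rounds. The first step is to show that per-round EF1 forces every round to be a $1$-$1$ split, and hence $x_1+y_1=k$. In a round where one agent receives both goods, the empty-handed agent envies and, since all utilities are strictly positive, removing either good from the other's bundle still leaves a bundle of strictly positive value to her, violating EF1. Hence every round assigns exactly one good to each agent.

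The second step is to show that every EF+PO overall allocation satisfies $x_1+y_1>k$. By Proposition~\ref{prop:2n-prop-impl-EF}, EF overall is equivalent to PROP overall, which translates to $2x_1+y_1\geq 3k/2$ and $3x_1+y_1\leq 2k$. Subtracting yields $x_1\leq k/2$, so $x_1+y_1\geq 3k/2-x_1\geq k$, with equality only at $(k/2,k/2)$. For odd $k$ this point is non-integer, so no EF allocation achieves the equality. For even $k\geq 4$, the integer allocation $(k/2-1,k/2+2)$ is feasible (since $k/2+2\leq k$) and proportional, keeps agent~$1$'s utility at $3k/2$, and strictly improves agent~$2$'s utility from $2k$ to $2k+1$; hence $(k/2,k/2)$ is not PO. Combined with the first step, no sequence can satisfy all three properties simultaneously for $k>2$.

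For the chores case, take $u_1=(-2,-1)$, $u_2=(-3,-1)$. The per-round analysis is symmetric: $2$-$0$ splits fail EF1 because the agent holding both chores envies and removing any single chore still leaves a bundle of strictly negative value, so $x_1+y_1=k$ again. The proportionality inequalities flip to $2x_1+y_1\leq 3k/2$ and $3x_1+y_1\geq 2k$, giving $x_1+y_1\leq k$ with the tight case $(k/2,k/2)$ non-integer for odd $k$ and, for even $k\geq 4$, Pareto-dominated by the analogous swap $(k/2+1,k/2-2)$. The only delicate part throughout is ruling out the equality case via the explicit Pareto swap; the rest is elementary algebra on the EF and PO linear inequalities.
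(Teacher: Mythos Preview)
Your proof is correct and follows essentially the same approach as the paper. Your profile is the paper's profile up to relabelling agents and items, and the argument---per-round EF1 forces $1$-$1$ splits, EF then pins down $(k/2,k/2)$, and an explicit swap Pareto-dominates it for even $k\geq 4$---is the same; the only minor difference is that you handle odd $k$ directly within the same profile via the non-integrality of $(k/2,k/2)$, whereas the paper appeals to Proposition~\ref{prop:noMultipleNoPROP}.
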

\begin{proof}
    First, observe that the number of rounds $k$ needs to be even, otherwise we know that EF cannot always be achieved (Proposition~\ref{prop:noMultipleNoPROP}). Next, consider the following utility profile over two items $\I = \{o_1, o_2\}$:

    \begin{table}[h]  
    \begin{center} 
    \begin{tabular}{c|ccc}
    & $o_1$ & $o_2$\\
    \midrule
    $1$& $1$&$3$\\
    $2$& $1$&$2$
    \end{tabular}
     \end{center}
    \end{table}
      
    For $k>2$, there is no EF and PO allocation sequence which is also per-round EF1. Indeed, to be per-round EF1, both agents should receive at least one item in each round: otherwise, the agent receiving no items would envy the other for more than one item. Hence, in each round, each agent will receive either $o_1$ or $o_2$ (but not both). Then, in any sequence of allocations where one of the agents receives $o_1$ more than $\nicefrac{k}{2}$ times, they will envy the other agent. Hence, we can discard them and only focus on the case where agent $1$ (resp. agent $2$) receives both items $\nicefrac{k}{2}$ times.

    Call this sequence $\reppi{k}$. We show that $\reppi{k}$ is not PO. In fact, it is dominated by the sequence where agent~$1$ gets only $o_1$ exactly $\nicefrac{k}{2}-2$ times, only $o_2$ exactly $\nicefrac{k}{2}+1$ times, and no items once. Indeed, the former sequence $\reppi{k}$ gives satisfaction $2k$ and $\nicefrac{3}{2}k$ to agent~$1$ and agent~$2$ (respectively), while the latter gives satisfaction $2k+1$ and $\nicefrac{3}{2}k$. Hence, there is no EF and PO allocation that is per-round EF1.

    Finally, we can make a counter-example where all items are chores by multiplying all utilities by $-1$.
\end{proof}


Nevertheless, for two agents and two rounds, i.e., $n=k=2$, it is always possible to achieve PO and EF as well as per-round EF1. Indeed, in this case, we can always transform an allocation that is PO and EF overall (which is guaranteed to exist for two agents) to an allocation that is per-round EF1 while preserving the PO and EF guarantees.  
A similar idea (i.e., exchanging items among two agents while preserving some initial property) was also used by \citet{ShoshanSeHa23}.



\begin{restatable}{theorem}{thmEFonePerroundPOtwo}\label{thm:POEFEF1:two}
    Suppose $k=n=2$. Given an allocation sequence that is Pareto-optimal and envy-free, an allocation sequence which is Pareto-optimal, envy-free, and per-round EF1 can be computed in polynomial time.
\end{restatable}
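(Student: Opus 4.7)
The plan is to exploit the fact that, starting from any PO+EF overall sequence $(\pi^1,\pi^2)$, one can freely shuffle which item copies appear in which round without changing the overall allocation, hence without losing PO or EF overall. The task reduces to finding a suitable rearrangement that is, in addition, per-round EF1.

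Concretely, I would first partition the items into three classes based on the input: $G_1 = \pi^1_1 \cap \pi^2_1$ (items agent~1 receives in both rounds), $G_2 = \pi^1_2 \cap \pi^2_2$ (the symmetric set for agent~2), and $F$ (items received exactly once by each agent). Any choice of $F_1 \subseteq F$ induces a rearranged sequence $(\rho^1,\rho^2)$ given by $\rho^1_1 = G_1 \cup F_1$, $\rho^1_2 = G_2 \cup (F \setminus F_1)$, $\rho^2_1 = G_1 \cup (F \setminus F_1)$, $\rho^2_2 = G_2 \cup F_1$, which satisfies $\rho^{\cup 2} = \pi^{\cup 2}$. Every such rearrangement therefore preserves EF and PO overall, and the problem reduces to choosing $F_1$ so that both $\rho^1$ and $\rho^2$ are EF1.

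To find such an $F_1$, I would design a swap-based algorithm. Initialize $F_1 = F \cap \pi^1_1$. As long as some round fails EF1---say agent~$j$ envies agent~$-j$ in round $r$ by more than one item---identify a pair of free items $o \in F_1$, $o' \in F \setminus F_1$ whose swap (i.e.\ updating $F_1$ to $(F_1 \setminus \{o\}) \cup \{o'\}$) strictly reduces the violating envy in round~$r$. The key structural observation is that each such swap acts on the two rounds by \emph{opposite} exchanges---round~1's bundles exchange $o$ and $o'$ between the agents while round~2 executes the reverse exchange---so the two per-round envies of any fixed agent change by equal and opposite amounts. Combined with the overall EF inequality, which forces those two per-round envies to sum to at most zero, this opposite-sign structure means that a violation in the bad round is accompanied by slack in the other round that can absorb the swap.

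The main obstacle is verifying (i) that an improving swap exists whenever some round fails EF1 without breaking the other round's EF1 status, and (ii) that the procedure terminates in polynomial time. For (i) I would perform a case analysis on which agent is envious and on whether the item that would witness an EF1 repair is a good or a chore, using the EF1-failure characterization (no item of sufficient absolute utility lies in either bundle in the bad round) together with the overall EF inequality to produce a valid swap. For (ii) I would exhibit a monotone potential, such as the maximum of the per-round envies or the sum of envy surpluses above their respective EF1 thresholds, that decreases by a quantity polynomially bounded away from zero at each step. As a fallback strategy should the iterative argument prove delicate, I would sort the items in $F$ by a canonical criterion (e.g.\ $u_1(o) - u_2(o)$), enumerate the $O(|F|)$ prefix subsets as candidates for $F_1$, and show by a discrete intermediate-value argument that some prefix yields EF1 in both rounds.
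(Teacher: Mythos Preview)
Your decomposition into $G_1$, $G_2$, $F$ and the observation that any choice of $F_1\subseteq F$ preserves overall PO and EF are exactly right, and this is the same starting point the paper uses.

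However, your main algorithm has a genuine gap: the pair swap $(F_1\setminus\{o\})\cup\{o'\}$ keeps $|F_1|$ constant, and this is fatal. Take $G_1=G_2=\emptyset$, $F=\{a,b\}$ with $u_1(a)=u_1(b)=u_2(a)=u_2(b)=1$, and input $\pi^1_1=\{a,b\}$, $\pi^1_2=\emptyset$, $\pi^2_1=\emptyset$, $\pi^2_2=\{a,b\}$. This is PO and EF overall, yet neither round is EF1. Your initialization gives $F_1=\{a,b\}$, so $F\setminus F_1=\emptyset$ and no swap is available; the algorithm is stuck. The only per-round EF1 rearrangement here has $|F_1|=1$, which your swaps can never reach. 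Your potential-based termination argument is also suspect: utilities are real-valued, so ``decreases by a quantity polynomially bounded away from zero'' needs justification you do not supply.

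The paper's approach differs in two structural ways you should note. First, it observes that by overall PO every item in $F$ must be an \emph{objective} good or chore (otherwise it would always go to the agent who values it more), and it exploits this to set up an \emph{extreme} initialization: all objective chores in $F$ to agent~1 and all objective goods to agent~2 in round~1, and the reverse in round~2. From there it transfers items \emph{one at a time} (not in pairs), so $|F_1|$ changes freely; each transfer monotonically improves agent~1's round-1 position and agent~2's round-2 position. Second, the correctness argument identifies the first transfer after which agent~1 is envy-free in round~1 and agent~2 in round~2, and then runs a four-case analysis showing that either the allocation just before or just after that transfer is per-round EF1. That case analysis uses \emph{per-round PO} (a consequence of overall PO) to rule out both agents being envious in the same round---an ingredient your sketch does not invoke.

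Your fallback (sort $F$ by $u_1-u_2$ and test prefixes) is in the right spirit---it allows $|F_1|$ to vary and is a one-dimensional sweep---but the sorting key is not the one that makes the intermediate-value argument go through; the paper's sweep is driven by the objective-good/objective-chore dichotomy, not by $u_1-u_2$, and the proof still needs the per-round PO observation and the case analysis mentioned above.
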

\begin{proof}
Consider an overall PO and EF allocation sequence $\reppi{2}$ for two agents. Let $\I_1=\pi^1_1\cap\pi^2_1$ (and similarly for $\I_2$) be the items assigned to agent $1$ (resp. $2$) in both rounds. Moreover, let $O=\I\setminus(\I_1\cup\I_2)$ be the items that each agent receives once. Observe that, by PO, we can assume that every subjective item $o$ is always assigned to the agent $i\in\{1,2\}$ for whom $u_i(o)>u_j(o)$ (where $j$ is the other agent). Furthermore, we can remove objective null items from consideration. Hence, all items in $O$ are objective goods or chores. Let $O^+\subseteq O$ be the objective goods in $O$, and let $O^-\subseteq O$ be the objective chores in $O$. Clearly, $O^+\cap O^-=\emptyset$.

We start by assigning all chores in $O$ to agent~$1$ in the first round (and to agent~$2$ in the second round), and conversely for the goods. Then, we refine this sequence as follows. We progressively move the chores from agent~$1$ to agent~$2$ in the first round, and vice versa in the second round, and conversely for the goods. We stop when the allocation becomes per-round EF1. The formal description of the algorithm is given as follows. 

\begin{algorithm}
\caption{Algorithm for computing a PO and EF allocation that is per-round EF1 }
\label{alg:POEF1:k=2}
\SetInd{0.8em}{0.3em}
{\bf Input}: A PO and EF allocation $\reppi{2}=({\pi}^1,{\pi}^2)$\;
Let $\I_1=\pi^1_1\cap\pi^2_1$ and $\I_2=\pi^1_2\cap\pi^2_2$\;
Let $O=I \setminus (I_1 \cup I_2)$\;
Let $O^+ \subseteq O$ be the set of objective goods in $O$ and $O^- \subseteq O$ the set of objective chores in $O$\;
Update $\reppi{2}=({\pi}^1,{\pi}^2)$ so that $\pi^1_1=\I_1\cup O^-$, $\pi^1_2=\I_2\cup O^+$, $\pi^2_1=\I_1\cup O^+$, and $\pi^2_2=\I_2\cup O^-$\;
Let $O=\{o_1,o_2,\ldots,o_t\}$ and initialize $j=1$\;
\While{$\reppi{2}$ is not per-round EF1}{
\If{$o_j$ is an objective chore}{
Remove $o_j$ from $\pi^1_1$ and $\pi^2_2$, and add it to $\pi^1_2$ and $\pi^2_1$\;
}
\Else{
Remove $o_j$ from $\pi^1_2$ and $\pi^2_1$, and add it to  $\pi^1_1$ and $\pi^2_2$\;
}
Update $j=j+1$\;
}
\Return $\reppi{2}$;
\end{algorithm}

    \begin{figure}[h!]
\centering
\begin{tikzpicture}
     \node (pi1) at (1.5,1.75) {$\pi^1$};
    \draw (0,0) rectangle (3,1.5); 
        \draw (1.5,0) -- (1.5,1.5);
     \draw[dashed] (0.1,0.1) rectangle (1.4,0.65); 
     \node (A11) at (0.75, 0.4) {$O^-$};
    \draw [->] (1.1,0.4) -- (2.25,0.4) node[midway,below] {$o$} ;
            \draw[dashed] (1.6,0.8) rectangle (2.9,1.4);      \node (A12) at (2.25, 1.1) {$O^+$};
    \draw [->] (1.8,1) -- (0.8,1) node[midway,above] {$o'$} ;
    
    \node (pi11) at (0.75,-0.25) {$\pi^1_1$};
    \node (pi12) at (2.25,-0.25) {$\pi^1_2$};
    
    \node (pi2) at (5.5,1.75) {$\pi^2$};
    \draw (4,0) rectangle (7,1.5); 
            \draw (5.5,0) -- (5.5,1.5);
         \draw[dashed] (4.1,0.8) rectangle (5.4,1.4);      \node (A21) at (4.75, 1.1) {$O^+$};
    \draw [->] (5.2,1) -- (6.25,1) node[midway,above] {$o'$} ;   
                \draw[dashed] (5.6,0.1) rectangle (6.9,0.65);      \node (A22) at (6.25, 0.4) {$O^-$};
    \draw [->] (5.8,0.4) -- (4.8,0.4) node[midway,below] {$o$} ;
     \node (pi21) at (4.75,-0.25) {$\pi^2_1$};
    \node (pi22) at (6.25,-0.25) {$\pi^2_2$};
\end{tikzpicture}
\caption{Exchanges of goods and chores between agents in the {\bf while}-loop of the algorithm.}
\end{figure}
    
Algorithm~\ref{alg:POEF1:k=2} runs in polynomial time and it remains to show that it is correct.

    First, observe that during the execution of the algorithm, $\reppi{2}$ remains an allocation that is PO and EF overall, since we never change the total amount of times an agent receives an item w.r.t. the initial allocation. Thus, if the initial allocation of Algorithm~\ref{alg:POEF1:k=2} is per-round EF1, we are done. We assume in the following that the initial allocation is not per-round EF1, which implies that at least one agent envies another in some round. Furthermore, observe that, by Proposition~\ref{prop:overall:perround:PO}, $\reppi{2}$ satisfies per-round PO as well. This fact will be useful later.
    
    By overall EF, we show the following fact, which will also be useful later:
    \begin{align*}
        2u_1(\I_1)+u_1(O) & \geq 2u_1(\I_2)+u_1(O) \\
        \implies u_1(\I_1) & \geq u_1(\I_2)
    \end{align*}
    Similarly, we can derive $u_2(\I_2)\geq u_2(\I_1)$. This implies that 
    $u_1(\I_1 \cup O^+)\geq u_1(\I_2 \cup O^-)$
    and 
    $u_2(\I_2 \cup O^+)\geq u_2(\I_1 \cup O^-)$ since all items in $O^+$ are objective goods and all items in $O^-$ are objective chores. 
    
    Since $u_1(\I_1 \cup O^+)\geq u_1(\I_2 \cup O^-)$, if the algorithm moves all the objective chores from agent $1$ to agent $2$ and all the objective goods from agent $2$ to agent $1$ in the first round, agent $1$ does not envy agent $2$ with respect to the first round. Moreover, her utility in the first round never decreases, so that once agent $1$ becomes envy-free in the first round, she remains envy-free in the latter steps. 
    Similarly, if the algorithm moves all the objective chores from agent $2$ to agent $1$ and all the objective goods from agent $1$ to agent $2$ in the second round, agent $2$ does not envy agent $1$ with respect to the second round. In addition, once agent $2$ becomes envy-free with respect to the second round, she remains envy-free in the latter steps. 

Thus, there exists an item $o_j$ (with $j\geq 1$) such that, after transferring $o_j$, agent $1$ does not envy agent $2$ in the first round, and agent $2$ does not envy agent $1$ in the second round. Let $o_j$ be the first such item. Recall that the initial allocation of Algorithm~\ref{alg:POEF1:k=2} is not per-round EF1. Thus, at least one agent $i$ envies the other agent in the $i$-th round before transferring $o_j$. We assume without loss of generality that before transferring $o_j$, agent~$1$ envies agent~$2$ in the first round. 
We further assume that $o_j$ is an objective chore (the case when $o_j$ is an objective good is analogous).

Now denote $\oreppi{2}=(\rho^1,\rho^2)$ by the allocation in the algorithm that appears before transferring $o_j$ and $\areppi{2}=(\eta^1,\eta^2)$ by the allocation after transferring $o_j$. Then, there must be two disjoint sets $L, R\subseteq O\setminus\{o_j\}$ with the following properties.
\begin{itemize}
        \item In the first round $\rho^1$, agent~$1$ gets $\I_1\cup L\cup\{o_j\}$ and agent~$2$ gets $\I_2\cup R$.
        \item In the second round $\rho^2$, agent~$1$ gets $\I_1\cup R$ and agent~$2$ gets $\I_2\cup L\cup\{o_j\}$.
        \item In $\areppi{2}$, $o_j$ is transferred from agent~$1$ to agent~$2$ in the first round, and $o_j$ is transferred from agent~$2$ to agent~$1$ in the second round (i.e., 
        $\eta^1_1=\I_1\cup L$, $\eta^1_2=\I_2\cup R \cup \{o_j\}$, $\eta^2_1=\I_1\cup R \cup \{o_j\}$, and $\eta^2_2=\I_2\cup L$). 
    \end{itemize} 

Observe that, by choice of $o_j$, agent~$1$ does not envy agent~$2$ in the first round of $\areppi{2}$, i.e., $u_1(\I_1\cup L)\geq u_1(\I_2\cup R\cup\{o_j\})$.
Further, agent~$2$ does not envy agent~$1$ in the second round of $\areppi{2}$, i.e., $u_2(\I_2\cup L)\geq u_2(\I_1\cup R\cup\{o_j\})$. 

We claim that at least one of the allocations, $\oreppi{2}$ and $\areppi{2}$, is per-round EF1 and thus the algorithm correctly finds a desired allocation. There are four cases:

    \begin{itemize}
        \item \textbf{Case 1}: $u_1(\I_1\cup L)\geq u_1(\I_2\cup R)$ and $u_2(\I_2\cup L)\geq u_2(\I_1\cup R)$. In this case, $\oreppi{2}$ is per-round EF1. Indeed, agent~$1$ is envious in the first round, so by EF (overall), she cannot be envious in the second round. By removing $o_j$ from her bundle in the first round, we eliminate envy by agent $1$, since $u_1(\I_1\cup L)\geq u_1(\I_2\cup R)$. Furthermore, agent~$2$ cannot be envious in the first round (by per-round PO, since agent~$1$ already is envious). Finally, if agent~$2$ is envious in the second round, we can eliminate her envy by removing $o_j$ from her bundle, as $u_2(\I_2\cup L)\geq u_2(\I_1\cup R)$.
        
        \item \textbf{Case 2}: $u_1(\I_1\cup L)< u_1(\I_2\cup R)$ and $u_2(\I_2\cup L)< u_2(\I_1\cup R)$. 
        We claim that $\areppi{2}$ is per-round EF1. By the choice of $o_j$, agent $1$ does not envy agent $2$ in the first round and agent $2$ does not envy agent $1$ in the second round. Moreover, since $u_1(\I_1\cup L)< u_1(\I_2\cup R)$ and $u_1(I_1) \geq u_1(I_2)$, $u_1(L) < u_1(R)$ and thus we have $u_1(\I_2\cup L)<u_1(\I_1\cup R)$. Similarly, we have $u_2(\I_1\cup L)< u_2(\I_2\cup R)$. 
        Thus, if we remove $o_j$ from the bundle of agent $1$ (resp. $2$) in the second round (resp. first) we eliminate envy, since $u_1(\I_2\cup L)<u_1(\I_1\cup R)$ and $u_2(\I_1\cup L)< u_2(\I_2\cup R)$.

        \item \textbf{Case 3}: $u_1(\I_1\cup L)< u_1(\I_2\cup R)$ and $u_2(\I_2\cup L)\geq u_2(\I_1\cup R)$. We show that $\areppi{2}$ is per-round EF1. Again, agent $1$ does not envy agent $2$ in the first round and agent $2$ does not envy agent $1$ in the second round. Further, since $u_1(\I_1\cup L)< u_1(\I_2\cup R)$ and $u_1(I_1) \geq u_1(I_2)$, $u_1(L) < u_1(R)$ and thus we have $u_1(\I_2\cup L)<u_1(\I_1\cup R)$. Hence, if we remove $o_j$ from the bundle of agent~$1$ in the second round, she does not envy agent~$2$. It remains to show that agent $2$ is not envious in the first round once we remove $o_j$ from her bundle. Toward a contradiction, assume that $u_2(\I_2\cup R)<u_2(\I_1\cup L)$. Now, suppose we change the first round as follows:
        \begin{itemize}
            \item Agent $1$ gets $\I_2\cup R$ instead of $\I_1\cup L$ and
            \item Agent $2$ gets $\I_1\cup L\cup\{o_j\}$ instead of $\I_2\cup R\cup\{o_j\}$.
        \end{itemize}
        Observe that, since $u_1(\I_1\cup L)< u_1(\I_2\cup R)$ and $u_2(\I_2\cup R)<u_2(\I_1\cup L)$, this is a Pareto-improvement of $\areppi{2}$, a contradiction. Thus, $u_2(\I_2\cup R)\geq u_2(\I_1\cup L)$, and agent~$2$ is not envious in the first round if we remove $o_j$ from her allocation.

        \item \textbf{Case 4}: $u_1(\I_1\cup L)\geq u_1(\I_2\cup R)$ and $u_2(\I_2\cup L)< u_2(\I_1\cup R)$. Similarly to the previous argument, we can show that $\areppi{2}$ is per-round EF1.

    \end{itemize}

    This concludes the proof.
\end{proof}

Combining the above with Theorem~\ref{thm:EFPO}, we obtain the following corollary. 

\begin{corollary}\label{cor:FPT-2-agents:EFPO}
If $k=n=2$, then an allocation which is Pareto-optimal, envy-free, and per-round EF1 always exists. Moreover, it can be computed in time $O(m\cdot3^m)$.
\end{corollary}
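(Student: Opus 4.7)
The corollary is essentially a direct composition of the two preceding results, so my plan is to combine them carefully and verify the runtime bound.

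First, I would invoke Theorem~\ref{thm:EFPO} in the special case $n=k=2$. That theorem guarantees the existence of a sequence $\reppi{2}$ that is envy-free and Pareto-optimal overall, and provides an explicit algorithm that finds such a sequence in time $O(m\cdot(k+1)^m)$, which for $k=2$ simplifies to $O(m\cdot 3^m)$. So after this step, I have a PO and EF sequence in hand within the claimed time bound.

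Second, I would feed this sequence as input to the algorithm of Theorem~\ref{thm:POEFEF1:two}, which takes any PO and EF sequence for $k=n=2$ and transforms it, in polynomial time, into a sequence that is additionally per-round EF1 while preserving both PO and EF overall. Since Theorem~\ref{thm:POEFEF1:two} guarantees correctness, the output sequence satisfies all three required properties simultaneously.

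Finally, I would argue the runtime. The first phase costs $O(m\cdot 3^m)$, while the second phase runs in polynomial time in $m$ (as stated in Theorem~\ref{thm:POEFEF1:two}). The total is dominated by the first phase, yielding an overall runtime of $O(m\cdot 3^m)$, matching the claim. There is no real obstacle here since the heavy lifting (existence of the PO+EF sequence and the exchange-based transformation preserving PO+EF while achieving per-round EF1) has already been done in the cited results; the only thing to check is that both theorems apply in the regime $n=k=2$ and that their runtimes compose as stated.
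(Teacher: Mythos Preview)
Your proposal is correct and matches the paper's own argument exactly: the paper simply states that the corollary follows by combining Theorem~\ref{thm:POEFEF1:two} with Theorem~\ref{thm:EFPO}, which is precisely the two-step composition (first obtain a PO+EF sequence in time $O(m\cdot 3^m)$, then apply the polynomial-time transformation to achieve per-round EF1) that you describe.
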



As we have seen in Proposition \ref{prop:no-ef1-perround-po}, we cannot guarantee per-round EF1 together with PO and EF overall for a more general number of rounds $k \in 2 \mathbb{N}$ with $k>2$. Nevertheless, these properties become compatible if we relax a per-round fairness requirement to the following (weaker) version of EF1, where the envy of $i$ toward $j$ can be eliminated by either giving a good of $j$ to $i$, or imposing a chore of $i$ on $j$. 
        
\begin{definition}[Weak EF1]
An allocation $\pi$ is \emph{weak EF1} if for all $i,j \in N$, either $u_i(\pi_i) \geq u_i(\pi_j)$ or there is an item $o \in \pi_i \cup \pi_j$ such that $u_i(\pi_i\cup \{o\}) \geq u_i(\pi_j\setminus \{o\})$ or $u_i(\pi_i\setminus \{o\}) \geq u_i(\pi_j\cup \{o\})$.
\end{definition}

Again, to prove the following theorem, we show that it is always possible to transform an allocation that is PO and EF overall to an allocation that is per-round weak EF1 while preserving the PO and EF guarantees.

\begin{restatable}{theorem}{TheoremPOEFperroundwEF}\label{thm:POEFperroundwEF1}
Suppose $n=2$. Given a $k$-round allocation that is Pareto-optimal and envy-free, an allocation which is Pareto-optimal, envy-free, and per-round weak EF1 can be computed in polynomial time.
\end{restatable}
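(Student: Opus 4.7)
The plan is to generalise the swap-based procedure of Theorem~\ref{thm:POEFEF1:two} to arbitrary even $k$, by applying it in turn to the $k/2$ consecutive pairs of rounds of the given PO+EF allocation. For each item $o$, let $c_i(o)$ denote the number of rounds in which $o$ is assigned to agent $i$ in the input $\reppi{k}$; any rearrangement of items across rounds that preserves all $c_i(o)$ leaves $\pi^{\cup k}$ unchanged and therefore preserves PO and EF overall. Overall PO further lets me assume, without loss of generality, that each subjective item is always allocated to the agent who values it more and that null items are irrelevant, so the scheduling freedom lies only in the objective goods and objective chores.

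I would then process the rounds in the consecutive pairs $(1,2),(3,4),\dots,(k-1,k)$, handling each pair independently. For each pair $(\pi^{2p-1},\pi^{2p})$, I would run a variant of Algorithm~\ref{alg:POEF1:k=2} internally to the pair: identify, relative to the pair only, the set $\I_1^{(p)}$ of items always given to agent~$1$ in the pair, the set $\I_2^{(p)}$ of items always given to agent~$2$ in the pair, and the flexible items $O^{(p)}=O^{(p)+}\cup O^{(p)-}$; initialise the pair at the extreme configuration of Algorithm~\ref{alg:POEF1:k=2}, and transfer items of $O^{(p)}$ one at a time between rounds $2p-1$ and $2p$, halting as soon as both rounds of the pair are weak EF1. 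The transfers act strictly within the pair, so the cross-pair structure is unchanged and PO+EF overall is preserved throughout.

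The main obstacle will be to prove that the intra-pair sweep always passes through a per-round weak-EF1 configuration, even when the pair itself, considered as a standalone two-round sub-instance, is not EF (concrete examples can be built showing that one cannot always arrange the sub-counts so that each pair is EF). In the $k=2$ case, pair-EF was essential to the four-case analysis yielding per-round EF1; in our setting, the relaxation from EF1 to weak EF1 supplies exactly the missing slack, since weak EF1 can resolve envy either by transferring a good from the envied agent to the envier, or by imposing a chore from the envier onto the envied agent. I would therefore extend the four-case analysis of Theorem~\ref{thm:POEFEF1:two} with additional cases for non-EF pairs, in each exhibiting one of the two configurations adjacent to the sweep's envy-flip step as per-round weak EF1, making explicit use of the chore-shedding direction that weak EF1 allows. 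Termination is immediate: each pair needs at most $|O^{(p)}|\leq m$ swap steps and there are $k/2$ pairs, giving a polynomial overall running time in $k$ and $m$.
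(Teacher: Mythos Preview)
Your pair-by-pair scheme has a genuine gap: restricting transfers to within a consecutive pair can leave you with no move at all, because the set $O^{(p)}$ of items that switch hands inside the pair may be empty even though the pair is far from weak~EF1. Take $k=4$, three objective goods $g_1,g_2,g_3$ each worth $1$ to both agents, and the input sequence in which agent~$2$ receives all three goods in rounds $1,2$ and agent~$1$ receives all three in rounds $3,4$. This sequence is PO and EF overall (utilities $6$ and $6$, identical valuations), so it is a legitimate input to the theorem. For the pair $(1,2)$ you get $\I_1^{(1)}=\emptyset$, $\I_2^{(1)}=\{g_1,g_2,g_3\}$, and $O^{(1)}=\emptyset$; your sweep cannot move anything, yet round~$1$ is not weak~EF1 for agent~$1$ since for any $o$ we have $u_1(\{o\})=1<2=u_1(\{g_1,g_2,g_3\}\setminus\{o\})$. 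The relaxation from EF1 to weak~EF1 buys nothing here because the obstruction is combinatorial (no flexible items in the pair), not a matter of envy slack. More generally, the four-case analysis of Theorem~\ref{thm:POEFEF1:two} leans on $u_1(\I_1)\ge u_1(\I_2)$ and $u_2(\I_2)\ge u_2(\I_1)$, both of which come from pair-level EF; once you drop that, you also lose the monotone ``envy flips exactly once'' structure that made the sweep terminate at a good configuration.

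The paper's proof avoids this by never pairing rounds. It keeps a global partition of rounds into those where agent~$1$ is envious ($E_1$) and those where she is not ($F_1$), and repeatedly transfers a single good or chore between an envious round $j\in E_1$ and an envy-free round $i\in F_1$, chosen arbitrarily across the whole sequence. Lemma~\ref{lem:benefinical-swap} guarantees such a transferable item exists whenever $u_1(\pi^i_1)>u_1(\pi^j_1)$, and overall EF guarantees $F_1\neq\emptyset$ as long as some round is still envious; Lemma~\ref{lem:swap} shows each transfer preserves PO and EF overall. The weak~EF1 relaxation is used only to argue that when a previously envy-free round $i$ becomes envious after donating one item, it is automatically weak~EF1 (undoing that single transfer restores non-envy). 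A second symmetric phase then repairs agent~$2$'s rounds while preserving agent~$1$'s weak~EF1, using per-round PO to control the sign of the transferred item for agent~$1$. If you want to salvage your approach, you would need to allow cross-pair transfers, at which point you essentially recover the paper's algorithm.
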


\begin{corollary}\label{cor:POEFperroundwEF1}
If $n=2$ and $k\in 2\mathbb{N}$, then an allocation which is Pareto-optimal, envy-free, and per-round weak EF1 always exists, and can be computed in time $O(m\cdot(k+1)^m)$.
\end{corollary}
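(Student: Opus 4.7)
The plan is to obtain Corollary~\ref{cor:POEFperroundwEF1} as an immediate consequence of the two preceding results. Theorem~\ref{thm:EFPO} (i.e., Theorem~\ref{thm:2andEvenEFandPO}) already guarantees that whenever $n=2$ and $k\in 2\mathbb{N}$, one can compute a sequence $\reppi{k}$ that is envy-free and Pareto-optimal overall in time $O(m\cdot(k+1)^m)$. Theorem~\ref{thm:POEFperroundwEF1} then takes any such PO and EF sequence as input and, in polynomial time, produces a sequence that additionally satisfies per-round weak EF1 while preserving both PO and EF.

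Concretely, the proof I would write is a two-line composition. First, I would invoke Theorem~\ref{thm:EFPO} to produce a PO and EF sequence $\reppi{k}$ in time $O(m\cdot(k+1)^m)$. Second, I would feed $\reppi{k}$ into the polynomial-time algorithm guaranteed by Theorem~\ref{thm:POEFperroundwEF1}, obtaining a sequence that is PO, EF, and per-round weak EF1. Since the second step runs in $\mathrm{poly}(m,k)$ time, the overall running time is dominated by the first step, yielding the claimed $O(m\cdot(k+1)^m)$ bound.

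There is essentially no obstacle here: both existence and the runtime bound fall out by chaining the two previous statements. The only mild subtlety to mention is that the polynomial-time transformation of Theorem~\ref{thm:POEFperroundwEF1} is additive in the cost of producing the initial PO and EF sequence, so the exponential-in-$m$ factor comes entirely from Theorem~\ref{thm:EFPO} and not from the weak-EF1 refinement step. I would therefore keep the proof of the corollary to two or three sentences, citing Theorem~\ref{thm:EFPO} for existence and the runtime, and Theorem~\ref{thm:POEFperroundwEF1} for the per-round weak EF1 upgrade.
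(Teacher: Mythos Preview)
Your proposal is correct and matches the paper's own argument exactly: the paper states that Corollary~\ref{cor:POEFperroundwEF1} follows by applying Theorem~\ref{thm:EFPO} to obtain a PO and EF sequence (accounting for the $O(m\cdot(k+1)^m)$ bound) and then invoking Theorem~\ref{thm:POEFperroundwEF1} for the polynomial-time per-round weak EF1 refinement. There is nothing to add.
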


In order to obtain Corollary~\ref{cor:POEFperroundwEF1}, it suffices to prove Theorem~\ref{thm:POEFperroundwEF1}, since we can apply Theorem \ref{thm:EFPO} to get an allocation sequence ${\bar \pi}^{(k)}$ that is PO and envy-free overall when $k\in 2\mathbb{N}$. 

To do so, suppose that we are given a PO and EF $k$-round allocation. Looking into each individual round, there may be an allocation $\pi^{\ell}$ that is not fair. Similarly to the proof of Theorem~\ref{thm:POEFEF1:two}, we thus repeatedly transfer an item between \emph{envious} rounds and \emph{envy-free} rounds while preserving the property that the overall allocation $\reppi{k}$ is PO and EF over the course of the algorithm. We can show that this process terminates in polynomial time and eventually yields a per-round weak EF1 allocation. The formal description of our algorithm is presented in Algorithm~\ref{alg:POEF1wEF1}.

Specifically, our adjustment procedure is divided into two phases: one that makes $\reppi{k}$ weak EF1 for agent~$1$ (Lines \ref{line:firstwhile} -- \ref{line:firstwhileend}), and another that makes $\reppi{k}$ weak EF1 for agent~$2$ (Lines \ref{line:secondwhile} -- \ref{line:secondwhileend}). In the first phase, it identifies $\pi^j$ and $\pi^i$ where agent~$1$ envies the other at $\pi^j$ but she does not envy at $\pi^i$ (Line~\ref{line:first:envy} and Line~\ref{line:first:EF}). 
It then finds either a good $o$ in $\pi^i_1 \setminus \pi^j_1$ or a chore in $\pi^j_1 \setminus \pi^i_1$ (Line~\ref{line:first:if}) and transfers the item between $\pi^i_1$ and $\pi^j_1$ without changing the overall allocation. That is, the algorithm transfers item $o$ from $\pi^i_1$ to $\pi^i_2$ and from $\pi^j_2$ to $\pi^j_1$ if it is a good (Line~\ref{line:first1:swap}); it transfers item $o$ from $\pi^i_2$ to $\pi^i_1$ and from $\pi^j_1$ to $\pi^j_2$ if it is a chore (Line~\ref{line:first2:swap}). In the second phase, we swap the roles of the two agents and apply the same procedures. 

The following lemmas ensure that a beneficial transfer is possible between the envious round $\pi^j$ and the envy-free round $\pi^i$ while keeping the PO and EF property.

\begin{lemma}\label{lem:benefinical-swap}
Let $\reppi{k}$ be a k-round allocation. 
Take any pair of distinct rounds $i,j \in [k]$. Suppose that agent~$1$ envies agent~$2$ at $\pi^j$ but does not envy agent~$2$ at $\pi^i$. Then, there exits an item $o$ such that
\begin{itemize}
    \item $o \in \pi^i_1 \setminus \pi^j_1$ and $u_1(o) >0$, or
    \item $o \in \pi^j_1 \setminus \pi^i_1$ and $u_1(o) < 0$. 
\end{itemize}
\end{lemma}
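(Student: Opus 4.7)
The plan is to quantify the ``envy gap'' of agent~$1$ between rounds $i$ and $j$ and decompose it over individual items, showing that a positive gap forces one of the two desired item types to exist.

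First, I would introduce the shorthand $d_\ell := u_1(\pi^\ell_1) - u_1(\pi^\ell_2)$ for $\ell \in \{i,j\}$. The hypothesis that agent~$1$ envies at round $j$ but not at round $i$ translates directly into $d_i \geq 0 > d_j$, so $d_i - d_j > 0$.

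Next, I would decompose $d_i - d_j$ over the items, exploiting additivity of $u_1$. For each $o \in \I$ there are four cases according to its owner in rounds $i$ and $j$: it lies in $\pi^i_1 \cap \pi^j_1$, in $\pi^i_2 \cap \pi^j_2$, in $\pi^i_1 \setminus \pi^j_1 = \pi^i_1 \cap \pi^j_2$, or in $\pi^j_1 \setminus \pi^i_1 = \pi^i_2 \cap \pi^j_1$. Items owned by the same agent in both rounds contribute $0$ to $d_i - d_j$; items held by agent~$1$ only in round $i$ contribute $+2u_1(o)$; items held by agent~$1$ only in round $j$ contribute $-2u_1(o)$. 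Summing up,
\begin{equation*}
d_i - d_j \;=\; 2\!\!\sum_{o \in \pi^i_1 \setminus \pi^j_1}\!\! u_1(o) \;-\; 2\!\!\sum_{o \in \pi^j_1 \setminus \pi^i_1}\!\! u_1(o) \;>\; 0.
\end{equation*}

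Finally, I would argue by contrapositive. If no item of the required form existed, then every $o \in \pi^i_1 \setminus \pi^j_1$ would satisfy $u_1(o) \leq 0$ and every $o \in \pi^j_1 \setminus \pi^i_1$ would satisfy $u_1(o) \geq 0$, which makes the displayed right-hand side nonpositive, contradicting the inequality above. Hence at least one of the two alternatives in the lemma must hold. There is no real obstacle here: the proof is a direct arithmetic decomposition, and the only care needed is with sign conventions (distinguishing goods from chores for agent~$1$) and ensuring the partition of $\I$ into the four intersections is correctly handled.
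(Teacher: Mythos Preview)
Your argument is correct and essentially the same as the paper's: both reduce to showing $u_1(\pi^i_1) > u_1(\pi^j_1)$ (the paper phrases this as $u_1(\pi^i_1)\geq \tfrac{u_1(\I)}{2} > u_1(\pi^j_1)$, you obtain the equivalent $d_i - d_j > 0$), then conclude by the same contrapositive on the set differences $\pi^i_1\setminus\pi^j_1$ and $\pi^j_1\setminus\pi^i_1$. The only difference is cosmetic---your ``envy gap'' decomposition makes the factor of $2$ explicit, while the paper cancels it implicitly via the halfway point $\tfrac{u_1(\I)}{2}$.
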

\begin{proof}
Since agent~$1$ envies agent~$2$ in the $j$-th round and does not envy in the $i$-th round, we have $u_1(\pi^{j}_1) < u_1(\pi^{j}_2)$ and $u_1(\pi^{i}_1) \geq u_1(\pi^{i}_2)$. Thus, $u_1(\pi^{i}_1)> \frac{u_1(\I)}{2} >u_1(\pi^{j}_1)$, which means that there exits an item $o \in \pi^i_1 \setminus \pi^j_1$ with $u_1(o) >0$, or there exists an item $o \in \pi^j_1 \setminus \pi^i_1$ with $u_1(o) < 0$. 
\end{proof}

\begin{lemma}\label{lem:swap}
Suppose $n=2$ and $\N=\{1,2\}$.
Let $\reppi{k}$ be a k-round allocation that is PO and EF. 
Take any pair of distinct rounds $i,j \in [k]$. Consider the following $k$-round allocation $({\hat \pi}^1,\ldots,{\hat \pi}^k)$ where ${\hat \pi}_t=\pi_t$ for $t \neq i,j$ and there exists $o \in {\hat \pi}^i_1$ such that
\begin{itemize}
    \item ${\hat \pi}^i_1=\pi^i_1 \setminus \{o\}$ and ${\hat \pi}^i_2=\pi^i_2 \cup \{o\}$, and
    \item ${\hat \pi}^j_1=\pi^j_1 \cup \{o\}$, and ${\hat \pi}^j_2=\pi^j_2 \setminus \{o\}$.
\end{itemize}
Then, $({\hat \pi}^1,\ldots,{\hat \pi}^k)$ is both PO and EF overall. 
\end{lemma}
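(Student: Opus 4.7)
The proof is essentially a single observation, wrapped in routine bookkeeping. The plan is to show that the swap preserves, for every ordered pair $(a,b)\in\{1,2\}^2$, the utility $u_a(\pi^{\cup k}_b)$; both conclusions will then follow mechanically from PO and EF of $\reppi{k}$.

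For the utility invariance, I would note that the swap moves one copy of $o$ from agent~$1$ to agent~$2$ in round $i$, and one copy of $o$ from agent~$2$ to agent~$1$ in round $j$. Since the model fixes $u_a(o^\ell)=u_a(o)$ independently of $\ell$, and utilities are additive, these two transfers cancel exactly for each overall bundle. Hence $u_a(\hat\pi^{\cup k}_1) = u_a(\pi^{\cup k}_1)$ and $u_a(\hat\pi^{\cup k}_2) = u_a(\pi^{\cup k}_2)$ for both $a\in\{1,2\}$.

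With this invariance in hand, EF overall is immediate: the inequalities $u_a(\pi^{\cup k}_a)\ge u_a(\pi^{\cup k}_b)$ witnessing EF of $\reppi{k}$ carry over verbatim to $(\hat\pi^1,\ldots,\hat\pi^k)$. For PO overall, if some $\oreppi{k}$ Pareto-dominated $(\hat\pi^1,\ldots,\hat\pi^k)$, then $u_a(\rho^{\cup k}_a)\ge u_a(\hat\pi^{\cup k}_a)=u_a(\pi^{\cup k}_a)$ for every $a$, with strict inequality for some $a$; so $\oreppi{k}$ would also Pareto-dominate $\reppi{k}$, contradicting its PO.

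There is no substantive obstacle here; the lemma is essentially a formal restatement of the fact that the utility vector depends only on the multiset of items each agent receives across all rounds, and that multiset is unchanged by the swap. The only bookkeeping point is to verify that the described swap is well defined, i.e.\ that $o\in\pi^j_2$ so that $o$ can be removed from $\pi^j_2$ and added to $\pi^j_1$; this is forced by exhaustiveness of round $j$'s allocation once $o\notin\pi^j_1$, which is precisely the situation supplied by Lemma~\ref{lem:benefinical-swap} whenever this lemma is invoked inside the algorithm.
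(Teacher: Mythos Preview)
Your proposal is correct and follows the same approach as the paper: both arguments rest on the observation that the swap leaves the overall multiset of items assigned to each agent unchanged, so all the relevant utilities (and hence EF and PO) are preserved. The paper's proof is just the one-line version of what you wrote; your additional bookkeeping is fine but not strictly needed.
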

\begin{proof}
The claim clearly holds since the two $k$-round allocations $\reppi{k}$ and $({\hat \pi}^1,\ldots,{\hat \pi}^k)$ assign the same set of items to each agent. 
\end{proof}

\begin{algorithm}
\caption{Algorithm for computing a PO and EF allocation that is per-round weak EF1 }
\label{alg:POEF1wEF1}
\SetInd{0.8em}{0.3em}
{\bf Input}: A PO and EF allocation ${\bar \pi}^{(k)}=({\pi}^1,\ldots,{\pi}^k)$\;
Let $E_1$ denote the set of indices $i$ such that $\pi^i$ is not envy-free for agent~$1$. Let $F_1=[k] \setminus E_1$.\;
\tcp*[h]{Envy-adjustment phase for agent~$1$}\\
\While{there exists $j \in E_1$ such that $\pi^j$ is not weak EF1 for agent~$1$\label{line:firstwhile}}{
Take such $j \in E_1$\label{line:first:envy}\;
\While{$\pi^j$ is not weak EF1 for agent~$1$\label{line:firstwhile2}}{
Let $i$ be a round with minimum index $i \in F_1$\label{line:first:EF}\;
Take item $o$ such that $o \in \pi^{i}_1 \setminus \pi^{j}_1$ with $u_1(o)>0$, or $o \in \pi^{j}_1 \setminus \pi^{i}_1$ with $u_1(o)<0$ (the existence of $o$ is guaranteed by Lemma~\ref{lem:benefinical-swap})\label{line:first:if}\;
\If{$u_1(o)>0$}{
Set ${\pi}^{i}_1=\pi^{i}_1 \setminus \{o\}$, ${\pi}^{i}_2=\pi^{i}_2 \cup \{o\}$, ${\pi}^j_1=\pi^j_1 \cup \{o\}$, and ${\pi}^j_2=\pi^j_2 \setminus \{o\}$\label{line:first1:swap}\;
}
\Else{
Set ${\pi}^{j}_1=\pi^{j}_1 \setminus \{o\}$, ${\pi}^{j}_2=\pi^{j}_2 \cup \{o\}$, ${\pi}^i_1=\pi^i_1 \cup \{o\}$, and ${\pi}^i_2=\pi^i_2 \setminus \{o\}$\label{line:first2:swap}\;
}

\If{$\pi^{i}$ is not envy-free for agent~$1$}{
Set $F_1=F_1 \setminus \{i\}$ and $E_1= E_1 \cup \{i\}$ \label{line:firstwhileend}\;
}
}
}
\tcp*[h]{Envy-adjustment phase for agent~$2$}\\
Let $E_2$ denote the set of indices $i$ such that $\pi^i$ is not envy-free for agent~$2$. (Note that $E_2 \subseteq F_1$ at this point.) Let $F_2=[k] \setminus E_2$.\;
\While{there exists $j \in E_2$ such that $\pi^j$ is not weak EF1 for agent~$2$\label{line:secondwhile}}{
Take such $j \in E_2$\label{line:second:envy}\;
\While{$\pi^j$ is not weak EF1 for agent~$2$}{
Let $i$ be a round with minimum index $i \in F_2$\label{line:second:EF}\;
Take item $o$ such that $o \in \pi^{i}_2 \setminus \pi^{j}_2$ with $u_2(o)>0$, or $o \in \pi^{j}_2 \setminus \pi^{i}_2$ with $u_2(o)<0$ (the existence of $o$ is guaranteed by Lemma~\ref{lem:benefinical-swap})\label{line:second:if}\;
\If{$u_2(o)>0$}{
Set ${\pi}^{i}_2=\pi^{i}_2 \setminus \{o\}$, ${\pi}^{i}_1=\pi^{i}_1 \cup \{o\}$, ${\pi}^j_2=\pi^j_2 \cup \{o\}$, and ${\pi}^j_1=\pi^j_1 \setminus \{o\}$\label{line:second1:swap}\;
}
\Else{
Set ${\pi}^{j}_2=\pi^{j}_2 \setminus \{o\}$, ${\pi}^{j}_1=\pi^{j}_1 \cup \{o\}$, ${\pi}^i_2=\pi^i_2 \cup \{o\}$, and ${\pi}^i_1=\pi^i_1 \setminus \{o\}$\label{line:second2:swap}\;
}
\If{$\pi^{i}$ is not envy-free for agent~$2$}{
Set $F_2=F_2 \setminus \{i\}$ and $E_2= E_2 \cup \{i\}$\label{line:secondwhileend}\;
}
}
}
\end{algorithm}

We are now ready to prove Theorem~\ref{thm:POEFperroundwEF1}. 

\begin{proof}[Proof of Theorem~\ref{thm:POEFperroundwEF1}]
In the following, we refer to the first {\bf while} loop (Lines \ref{line:firstwhile} -- \ref{line:firstwhileend}) as the \emph{first phase} and the second {\bf while} loop (Lines \ref{line:secondwhile} -- \ref{line:secondwhileend}) as the \emph{second phase}.  

To see that the first phase is well-defined, we show that during the execution of the first phase, $F_1$ is nonempty and corresponds to the rounds where agent~$1$ is envy-free. Indeed, consider the allocation $\pi^j$ just after Line~\ref{line:first2:swap}; since $\pi^j$ is not weak EF1 before the transfer operation in Lines~\ref{line:first1:swap} and~\ref{line:first2:swap}, agent~$1$ remains envious at $\pi^j$ just after Line~\ref{line:first2:swap}, and hence $j$ continues to belong to $E_1$. Thus, during the execution of the first phase, $E_1$ corresponds to the set of the rounds where agent~$1$ envies the other while $F_1$ corresponds to the set of rounds where agent~$1$ is envy-free. 
Further, by Lemma~\ref{lem:swap}, the algorithm keeps the property that the allocation $\reppi{k}$ is PO and EF; by Pareto-optimality of $\reppi{k}$, each $\pi^i$ for $i \in [k]$ is Pareto-optimal within the round. Thus, at most one agent envies the other at each $\pi^i$. 
Therefore, during the execution of the first phase, $\reppi{k}$ remains envy-free for agent~$1$, which implies that there exists a round $i$ such that $\pi^i$ is envy-free for agent~$1$, namely, $F_1$ is nonempty. By Lemma~\ref{lem:benefinical-swap}, there exists an item $o$ satisfying the condition in Line~\ref{line:first:if}. A similar argument shows that the second phase is also well-defined.

Next, we show that the first phase terminates in polynomial time and the allocation $\reppi{k}$ just after the first phase is per-round weak EF1 for agent~$1$. To see this, consider $\pi^i$ defined in Line~\ref{line:first:EF}. After the transfer operation in Lines~\ref{line:first1:swap} and~\ref{line:first2:swap}, $\pi^i$ does not violate weak EF1 from the viewpoint of agent~$1$; indeed, agent~$1$ may envy the other at $\pi^i$ after the transfer operation, but since $\pi^i$ is envy-free for agent~$1$ before, the envy can be eliminated by either stealing the good $o$ $(u_1(o)>0)$ from the other agent, or transferring the chore $o$ $(u_1(o)<0)$ to the other agent. 
Now, consider the allocation $\pi^j$ that is not weak EF1 for agent~$1$. Clearly, at each iteration of the {\bf while} loop in Lines \ref{line:firstwhile2} -- \ref{line:firstwhileend}, the number of goods $o$ $(u_1(o)>0)$ increases while the number of chores $o$ $(u_1(o)<0)$ decreases for agent~$1$. Thus, $\pi^j$ becomes weak EF1 for agent~$1$ in $O(m)$ iterations. 
Thus, we conclude that the first phase terminates in polynomial time and the allocation $\reppi{k}$ just after the first phase is per-round weak EF1 for agent~$1$. 

A similar argument shows that the second phase terminates in polynomial time and the final allocation is per-round weak EF1 for agent~$2$. It remains to show that $\reppi{k}$ remains weak EF1 for agent~$1$ during the second phase. 

Consider an arbitrary iteration in the second phase and allocations $\pi^i$ and $\pi^j$ defined in Lines~\ref{line:second:envy} and \ref{line:second:EF}, respectively. Let $o$ be an item chosen in Line~\ref{line:second:if}. Assume that $\pi^i$ and $\pi^j$ are weak EF1 for agent~$1$ before the transfer operation in Lines~\ref{line:second1:swap} and~\ref{line:second2:swap}. We show that both $\pi^i$ and $\pi^j$ remain weak EF1 for agent~$1$ just after the transfer operation.  
First, consider $\pi^i$. Since after the swap $\pi^i$ remains Pareto-optimal within the round, we have: 
\begin{itemize}
    \item If $u_2(o)>0$, then $u_1(o)>0$. 
    \item If $u_2(o)<0$, then $u_1(o)<0$. 
\end{itemize}
Thus, agent~$1$'s utility does not decrease after the transfer operation. Thus, $\pi^i$ remains weak EF1 for agent~$1$. 

Next, consider $\pi^j$. As we have observed before, agent~$2$ remains envious at $\pi^j$ just after the transfer operation in Lines~\ref{line:second1:swap} and~\ref{line:second2:swap} since $\pi^j$ is not weak EF1 for agent~$2$ before the transfer operation. This means that by Pareto-optimality of $\pi$, $\pi^j$ is still envy-free for agent~$1$ after the transfer operation. Thus, $\pi^j$ is weak EF1 for agent~$1$. 

It is not difficult to see that Algorithm~\ref{alg:POEF1wEF1} runs in polynomial time. 
\end{proof}

Finally, we show that, if we do not require PO, an allocation that is EF and per-round EF1 can always be found (when $k$ is even) in polynomial time.

\begin{restatable}{theorem}{TheoremEF}\label{thm:2-agents-EF+per-round EF1}
If $n=2$ and $k\in 2\mathbb{N}$, then an allocation which is envy-free overall and per-round EF1 always exists, and can be computed in polynomial time.
\end{restatable}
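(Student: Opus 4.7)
The plan is to reduce to the case $k=2$ and then construct a two-round solution directly. For the reduction: given a 2-round allocation $(\sigma^1,\sigma^2)$ for two agents that is envy-free overall and per-round EF1, repeating this pair $k/2$ times yields a $k$-round sequence. Per-round EF1 is trivially preserved, and the overall allocation is $k/2$ copies of the 2-round total; since envy-freeness implies proportionality (Proposition~\ref{prop:EF-impl-prop}) and scaling preserves proportionality, the overall $k$-round allocation is proportional, hence envy-free for two agents by Proposition~\ref{prop:2n-prop-impl-EF}.

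For $k=2$, I would take $\sigma^2$ to be the swap of $\sigma^1$ (i.e., $\sigma^2_1=\sigma^1_2$ and $\sigma^2_2=\sigma^1_1$). This immediately enforces overall envy-freeness with equality, since each agent then receives every item exactly once across the two rounds. The remaining task is to construct a one-shot allocation $\sigma^1$ such that both $\sigma^1$ and its swap are EF1; I will call such a $\sigma^1$ a \emph{bilateral EF1} allocation.

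To produce a bilateral EF1 allocation in polynomial time, I would start from any EF1 allocation $\pi$ (obtainable via a standard one-shot algorithm, such as envy-cycle elimination for mixed items). If the swap of $\pi$ is already EF1, output the pair $(\pi,\pi')$. Otherwise, some agent's envy in the swap exceeds the per-item EF1 slack, which identifies a surplus side in $\pi$; I would then transfer a carefully chosen item (a good from the surplus side, or a chore into the deficit side) between the two bundles of $\pi$. The choice of item follows the same style of case analysis as Cases~1--4 in the proof of Theorem~\ref{thm:POEFEF1:two}, ensuring that (i) the relevant utility imbalance strictly decreases and (ii) the current allocation remains EF1. A monotone potential (e.g., the absolute utility gap between the two bundles, summed over the agents) then bounds the number of iterations by a polynomial in $m$.

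The main obstacle is verifying that such a single-item transfer, satisfying both (i) and (ii), always exists whenever the swap of $\pi$ fails EF1. This is a structural claim about EF1 allocations for two agents with mixed items, and its proof is an item-exchange argument in the spirit of the algorithm that proves Theorem~\ref{thm:POEFperroundwEF1} (though simpler, since Pareto-optimality is no longer required). Once this lemma is established, combining it with the reduction above yields a polynomial-time construction achieving overall envy-freeness together with per-round EF1 for every even $k$.
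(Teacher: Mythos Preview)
Your reduction to $k=2$ is sound, and committing to $\sigma^2$ being the swap of $\sigma^1$ is a clean strategy: overall EF is then automatic and the task reduces to finding a single \emph{bilateral EF1} partition. The gap is that you do not actually prove such a partition exists, nor that your local search finds one in polynomial time. You flag the item-transfer lemma as the ``main obstacle'' and leave it open, and the analogy with Theorem~\ref{thm:POEFEF1:two} is weaker than it looks: that argument uses Pareto-optimality essentially (PO forces every item that changes hands between rounds to be an objective good or chore, which is precisely what makes the one-directional transfers monotone for both agents at once; its Cases~1--4 are a termination check, not an item-selection rule). Without PO, a transfer that repairs the swap for agent~$1$ can break EF1 in the current allocation for agent~$2$, and nothing in your sketch rules out cycling. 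Separately, your potential $\sum_i|u_i(A)-u_i(B)|$ is real-valued, so strict decrease gives no polynomial bound on iterations (and integer utilities yield only a pseudo-polynomial bound, not a bound polynomial in~$m$).

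The paper sidesteps both issues by not insisting on a swap pair. It fixes an arbitrary item ordering, locates the single index $j$ where agent~$1$'s preference between the prefix $L=I_{j-1}$ and suffix $R=\bar{I}_j$ flips as $o_j$ crosses sides, and then a short case split on agent~$2$'s preferences over $L$, $R$, and $o_j$ directly outputs one of three $2$-round patterns: a single EF allocation repeated, the swap pair on $(L,R\cup\{o_j\})$, or $L$ and $R$ held fixed with only $o_j$ alternating. In two of the three cases the two rounds are not swaps of each other, so the paper's proof is not an instance of your scheme; and the whole construction is a single linear scan with no iterative repair.
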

\begin{proof}
Consider the following procedure. Create a sequence of items $P=(o_1,o_2,\ldots,o_m)$. For $j=1,2,\ldots,m$, let $I_j=\{o_1,o_2,\ldots,o_j\}$ and ${\bar I}_j=\{o_{j+1},o_{j+2},\ldots,o_m\}$. Let $I_0=\emptyset$ and ${\bar I}_0=I_m$. Observe that we have either
\begin{itemize}
\item $u_1(I_0) \leq u_1({\bar I}_0)$ and $u_1(I_m) \geq u_1({\bar I}_m)$, or
\item $u_1(I_0) \geq u_1({\bar I}_0)$ and $u_1(I_m) \leq u_1({\bar I}_m)$. 
\end{itemize}
Thus, there exists an index $j \in \{1,2,\ldots,m\}$ where the preference of agent~$1$ switches when $o_j$ moves from the right to the left bundle, namely, 
\begin{itemize}
\item[(\emph{i})] $u_1(I_{j-1}) \leq u_1({\bar I}_{j-1})$ and $u_1(I_{j}) \geq u_1({\bar I}_{j})$, or
\item[(\emph{ii})] $u_1(I_{j-1}) \geq u_1({\bar I}_{j-1})$ and $u_1(I_{j}) \leq u_1({\bar I}_{j})$. 
\end{itemize}

Let $L=I_{j-1}$ and $R={\bar I}_{j}$. Assume without loss of generality that $u_1(L) \geq u_1(R)$. 

First, suppose that we are in the first case (\emph{i}), i.e., $u_1(I_{j-1}) \leq u_1({\bar I}_{j-1})$ and $u_1(I_{j}) \geq u_1({\bar I}_{j})$. In other words, this means that agent~$1$ weakly prefer a bundle with $o_j$: $u_1(L) \leq u_1(R\cup \{o_j\})$ and $u_1(L\cup \{o_j\}) \geq u_1(R)$.  
Consider the following cases. 
\begin{itemize} 
    \item Suppose that agent~$2$ weakly prefers $L$ to the remaining items, or $R$ to the remaining items, i.e., $u_2(L) \geq u_2(R\cup \{o_j\})$ or $u_2(R) \geq u_2(L\cup \{o_j\})$. Then, there is an envy-free allocation and hence the sequence that repeats this allocation $k$ times is a desired solution. Indeed, if $u_2(L) \geq u_2(R\cup \{o_j\})$, then the allocation that allocates $R$ together with $o_j$ to agent~$1$ and $L$ to agent~$2$ is an EF allocation. If $u_2(R) \geq u_2(L\cup \{o_j\})$, then the allocation that allocates $L$ together with $o_j$ to agent~$1$ and $R$ to agent~$2$  is an EF allocation. 

    \item Suppose that agent~$2$ weakly prefers $R \cup \{o_j\}$ to $L$, and $L \cup \{o_j\}$ to $R$, namely, $u_2(R\cup \{o_j\}) \geq u_2(L)$ and $u_2(L\cup \{o_j\}) \geq  u_2(R)$. If $u_2(R) \leq u_2(L)$, this means that both agents weakly prefer $L$ to $R$; thus, a sequence that repeatedly swaps two bundles $L$ and $R\cup \{o_j\}$ among the two agents is a per-round EF1 and EF. 
    On the other hand, if $u_2(L) \leq  u_2(R)$, this means that the two agents have a different preference among $L$ and $R$: while agent~$1$ weakly prefers $L$ to $R$, agent~$2$ weakly prefers $R$ to $L$. Thus, allocate $L$ to agent~$1$ and $R$ to agent~$2$ and alternate between assigning item $o_j$ to agent~$1$ and to agent~$2$. 
    Clearly, such a sequence is per-round EF1. To see that it is EF, observe that we have $u_1(L\cup \{o_j\}) \geq u_1(R\cup \{o_j\})$ and $u_1(L) \geq u_1(R)$ and thus agent~$1$ does not envy agent~$2$ at the $k$-round allocation. Similarly, we have $u_2(R\cup \{o_j\}) \geq u_2(L\cup \{o_j\})$ and $u_2(R) \geq u_2(L)$ and thus agent~$2$ does not envy agent~$1$ at the $k$-round allocation. 
\end{itemize}

Next, suppose that we are in the second case (\emph{ii}), i.e., $u_1(I_{j-1}) \geq u_1({\bar I}_{j-1})$ and $u_1(I_{j}) \leq u_1({\bar I}_{j})$. In other words, agent~$1$ weakly prefer a bundle without $o_j$: $u_1(L) \geq u_1(R\cup \{o_j\})$ and $u_1(L\cup \{o_j\}) \leq u_1(R)$. 
Consider the following cases. 
\begin{itemize} 
    \item Suppose that agent~$2$ weakly prefers $R \cup \{o_j\}$ to $L$, or $L \cup \{o_j\}$ to $R$, namely, $u_2(R\cup \{o_j\}) \geq u_2(L)$ or $u_2(L\cup \{o_j\}) \geq  u_2(R)$. Then, we show that there is an envy-free allocation and hence the sequence that repeats this allocation $k$ times is a desired solution.  
    Indeed, if $u_2(R\cup \{o_j\}) \geq u_2(L)$, then the allocation giving $L$ to agent $1$ and $R\cup \{o_j\})$ to agent $2$ is an EF allocation. 
    Similarly, if $u_2(L\cup \{o_j\}) \geq  u_2(R)$, then the allocation giving $R$ to agent $1$ and $L\cup \{o_j\})$ to agent $2$ is an EF allocation.

    \item Suppose that agent~$2$ weakly prefers $L$ to $R \cup \{o_j\}$, and $R$ to $L \cup \{o_j\}$. 
    If $u_2(L) \geq  u_2(R)$, this means that both agents weakly prefer $L$ to $R$; thus, a sequence that repeatedly swaps two bundles $L\cup \{o_j\}$ and $R$ among the two agents is a per-round EF1 and EF. 
    
    If $u_2(R) \geq u_2(L)$, then this means that the two agents have a different preference among $L$ and $R$: while agent~$1$ weakly prefers $L$ to $R$, agent~$2$ weakly prefers $R$ to $L$. Thus, allocate $L$ to agent~$1$ and $R$ to agent~$2$ and alternate between assigning item $o_j$ to agent~$1$ and to agent~$2$. Such a sequence is per-round EF1. To see that it is EF, observe that we have $u_1(L) \geq u_1(R)$ and $u_1(L\cup \{o_j\}) \geq u_1(R\cup \{o_j\})$ and thus agent~$1$ does not envy agent~$2$. Similarly, since $u_2(R\cup \{o_j\}) \geq u_2(L\cup \{o_j\})$ and $u_2(R) \geq u_2(L)$, agent~$2$ does not envy agent~$1$.
\end{itemize}

It is immediate to see that the above procedure can be implemented in polynomial time. 
This concludes the proof.
\end{proof}

We do not know whether EF and per-round EF1 can be simultaneously achieved for $n>2$ agents and $k\in n \mathbb{N}$ rounds (EF overall is possible in this case by Proposition~\ref{prop:EF-with-cn-rounds}). We leave it as an interesting open problem for future work.

\section{Variable number of rounds}\label{sec:variable}

In the previous sections, we have assumed that $k$ (the number of rounds) is predetermined. In this section, we study the case when $k$ can be \emph{variable}.

We have seen that, whenever $k$ is fixed, overall EF and PO become incompatible in general (Theorem~\ref{thm:no-EF-PO-with-n-greater-than-2}). This raises the question of whether this is possible if $k$ is not predetermined. More precisely, we ask the following: given a utility profile, is there a number $k$ of rounds for which there is a sequence of allocations $\reppi{k}$ that satisfies desirable overall and per-round guarantees? We answer this question affirmatively.

\begin{theorem}\label{thm:variable:PO+EF+PROP2}
    If for all $i\in\N$ and $o\in\I$ we have $u_i(o)\in\mathbb{Q}$, then there exists $k\in\mathbb{N}$ and a $\reppi{k}$ such that $\reppi{k}$ is envy-free, Pareto-optimal, and per-round PROP[1,1].\label{thm:EF-PO-n-agents-fractional}
\end{theorem}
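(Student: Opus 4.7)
The plan is to exploit the connection between repeated allocation and the probabilistic (or divisible) fair division setting, as the paragraph above suggests. The idea is that a single fractional allocation that is ex-ante envy-free and Pareto-optimal, together with a decomposition of it into integral allocations that are each PROP[1,1], exactly corresponds (after clearing denominators) to a repeated allocation with the desired overall and per-round properties.

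First, I would work in the fractional/probabilistic relaxation, in which each agent~$i$ receives a fraction $x_{i,o}\in[0,1]$ of each item $o$, with $\sum_{i\in N}x_{i,o}=1$, and her utility extends linearly. I would invoke an existence result from the divisible or best-of-both-worlds fair division literature (e.g., analogues of competitive equilibrium with equal incomes for the mixed-items setting, or the results of \citet{aziz2023best}) to obtain a fractional allocation $\bfx^{*}$ that is simultaneously envy-free and (fractionally) Pareto-optimal, and which can be written as a convex combination $\bfx^{*}=\sum_{j=1}^{s}\lambda_{j}\cdot\mathbb{1}_{\pi^{j}}$ of integer allocations $\pi^{1},\ldots,\pi^{s}$, each of which is PROP[1,1]. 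Because all input utilities are in $\mathbb{Q}$, I would argue that the supporting allocations and weights $\lambda_{j}$ can be chosen rational (an LP with rational data has a rational optimal vertex; similarly, the CEEI-style equilibria are rational when utilities are).

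Given the rational decomposition, let $k$ be a common denominator of $\lambda_{1},\ldots,\lambda_{s}$, and construct the sequence $\reppi{k}$ that plays allocation $\pi^{j}$ exactly $k\lambda_{j}$ times. I would then verify the three properties:
\begin{itemize}
\item \emph{Per-round PROP[1,1]}: each $\pi^{j}$ in the support is PROP[1,1] by construction.
\item \emph{Overall EF}: the total number of times each item goes to each agent equals $k\cdot x^{*}_{i,o}$, so $u_{i}(\pi^{\cup k}_{i})=k\cdot u_{i}(\bfx^{*}_{i})$ for every $i$; envy-freeness of $\bfx^{*}$ therefore transfers to envy-freeness of $\pi^{\cup k}$.
\item \emph{Overall PO}: any integer allocation $\oreppi{k}$ of $k$ copies of $\I$ corresponds to the fractional allocation $\tfrac{1}{k}\rho^{\cup k}$, so a Pareto-improvement of $\pi^{\cup k}$ would yield a fractional Pareto-improvement of $\bfx^{*}$, contradicting fractional PO.
\end{itemize}

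The main obstacle is the second step: producing a fractional EF and PO allocation which admits a decomposition into PROP[1,1] integer allocations. For goods only (or chores only), this is essentially the content of existing best-of-both-worlds theorems, which give ex-ante EF+PO and ex-post PROP1, and hence yield the stronger conclusion mentioned after the theorem. For the mixed setting one needs the PROP[1,1] analogue; I would either cite the appropriate existence result or, failing that, round $\bfx^{*}$ directly by a Birkhoff--von-Neumann style argument and verify PROP[1,1] of the resulting supports by the standard swap-argument used for PROP1, noting that in the mixed case one is allowed to remove one item from the agent's own bundle \emph{and} add one from outside, which is exactly what PROP[1,1] permits.
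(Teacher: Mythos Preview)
Your high-level plan is exactly the paper's: obtain a fractional PO and EF allocation, decompose it into integral allocations that are each PROP[1,1] with rational weights, clear denominators to get a repeated allocation, and verify overall EF and PO by the rescaling argument (this is precisely the paper's Lemma~\ref{lem:fractionalPOEF}).

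The one place where your sketch is thinner than the paper is the decomposition step, which you correctly flag as the main obstacle. For the mixed case there is no off-the-shelf result to cite, and a plain Birkhoff--von~Neumann decomposition of $\bfx^{*}$ does not by itself guarantee PROP[1,1] of the supports. The paper's Lemma~\ref{lem:decompostion:POEF} does this work explicitly: it invokes the bihierarchy decomposition of \citet{budish2013designing} with a specifically engineered laminar family, namely for each agent~$i$ the nested sets $G_{i,\ell}$ of her top-$\ell$ subjective goods and $C_{i,\ell}$ of her bottom-$\ell$ subjective chores, with quotas $\lfloor x^{*}_{S}\rfloor$ and $\lceil x^{*}_{S}\rceil$. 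These constraints force each integral support to take at least $\lfloor x^{*}_{G_{i,\ell}}\rfloor$ of the top-$\ell$ goods and at most $\lceil x^{*}_{C_{i,\ell}}\rceil$ of the bottom-$\ell$ chores, and a telescoping-sum (Abel summation) argument then shows that any shortfall from the fractional utility on the goods side can be repaired by adding one good, and any shortfall on the chores side by removing one chore---which is exactly PROP[1,1]. Your ``standard swap-argument used for PROP1'' is the right intuition, but it only goes through once these laminar constraints are in place; that is the missing ingredient in your sketch.
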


Recall that PROP[1,1] and PROP1 coincide for the case of non-negative value items and for the case of non-positive value items; thus, in such cases, the above theorem holds with a per-round guarantee of PROP1. 

To show Theorem~\ref{thm:variable:PO+EF+PROP2}, we establish a connection between our setting and the divisible~\citep{BMSY17a} and probabilistic settings~\citep{aziz2023best,budish2013designing}. More precisely, we utilize a proof technique similar to that of \citet{aziz2023best}, who used a decomposition lemma by \citet{budish2013designing} for divisible item allocations to construct a randomized allocation that satisfies desirable efficiency and fairness notions when all items are goods. 
We prove that there exists a randomized allocation satisfying similar desirable guarantees for the mixed case of goods and chores and translate it into our repeated setting. 

Before we proceed, we provide some preliminary definitions. Consider a set of $n$ agents $\N=[n]$ and of items $\I$. 
A \emph{randomized allocation} is a set of ordered pairs
$(p^t,\pi^{t})_{t \in [h]}$, such that for every $t \in [h]$, $\pi^{t}$ is an allocation implemented with probability
$p^t \in [0,1]$, where $\sum_{t \in [h]}p_t=1$; the allocations $\pi^{1}, \pi^{2}, \ldots,\pi^{h}$ are referred to as the \emph{support} of a randomized allocation.

Given a randomized allocation $(p^t,\pi^{t})_{t \in [h]}$ with $p^t \in \mathbb{Q}$ for each $t \in [h]$, we define its \emph{repeated translation} as follows. Since all $p^t$ are rational numbers, there exists a $k\in\mathbb{N}$ such that, for all $t \in [h]$, $p_t=\nicefrac{\ell_t}{k}$ (for some $\ell_t\in\mathbb{N}$). In other words, all fractions defined by $(p_t)_{t \in [h]}$ are expressed in terms of the same denominator, $k$. 
The repeated translation of $(p^t,\pi^{t})_{t \in [h]}$ is the allocation sequence $\reppi{k}$ of form:
\[
    \reppi{k}=(\underbrace{\pi^1,\ \ldots,\ \pi^1}_{\ell_1\text{ times}},\ \underbrace{\pi^2,\ldots,\ \pi^2}_{\ell_2\text{ times}},\ \ldots,\ \underbrace{\pi^h,\ \ldots,\ \pi^h}_{\ell_h\text{ times}}).
\]
In other words, for each $t \in [h]$, $\pi^{t}$ appears exactly $\ell_t$ times (where $p_t=\nicefrac{\ell_t}{k}$).

We first show the following lemma, stating that any PO and EF fractional allocation can be decomposed into a randomized allocation whose support satisfy PROP[1,1].
\begin{lemma}\label{lem:decompostion:POEF}
Suppose that $\bfx$ is a PO and EF fractional allocation. Then there exists a randomized allocation $(p^t,\pi^{t})_{t \in [h]}$ that implements $\bfx$ with the following properties: 
\begin{enumerate}
\item $p_1,p_2,\ldots,p_{h} \in \mathbb{Q}_+$,  
\item $\pi^{t}_i \subseteq \{\, o \in I \mid x_{i,o}>0\,\}$ for each $i \in N$ and $t \in [h]$, and 
\item $\pi^{t}$ is PROP[1,1] for each $t \in [h]$. 
\end{enumerate}    
\end{lemma}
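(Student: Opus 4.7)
The plan is to follow the decomposition technique of \citet{budish2013designing}, adapted by \citet{aziz2023best} for the goods-only best-of-both-worlds setting, and extend it to the mixed goods-and-chores case. The main idea is to iteratively eliminate cycles from the bipartite \emph{consumption graph} $G(\bfx)$ of $\bfx$ (agents on one side, items on the other, with an edge $(i,o)$ whenever $x_{i,o} > 0$), until each remaining piece has a forest consumption graph, and then round on each forest in a structure-preserving way.

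I would first show that $\bfx$ can be written as a convex combination $\sum_t p_t \bfx^t$, with rational weights $p_t > 0$, such that each $\bfx^t$ is a fractional allocation whose consumption graph is a forest and whose support is contained in that of $\bfx$. Whenever $G(\bfx)$ has a cycle $C$, one orients its edges alternately and shifts mass $\epsilon$ simultaneously in both directions along $C$; taking $\epsilon$ maximally decomposes $\bfx$ into two fractional allocations, each with strictly fewer fractional edges and rational entries (so the weights stay in $\mathbb{Q}$). Recursion terminates at forests. Next, for each $\bfx^t$ with a forest consumption graph, I would apply a tree-rounding procedure: repeatedly peel leaves (an item leaf must be assigned to its unique fractional owner, and an agent leaf forces a further decomposition over the two choices $x^t_{i,o} \in \{0,1\}$). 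The result is a further convex decomposition $\bfx^t = \sum_s q_s \pi^{t,s}$ into integer allocations $\pi^{t,s}$, with rational $q_s$, whose supports lie in $G(\bfx^t) \subseteq G(\bfx)$. Combining yields the sought randomized allocation, giving properties 1 and 2.

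The crux of the argument is property 3 (per-round PROP[1,1]). Since $\bfx$ is envy-free and utilities are additive, $\bfx$ is proportional: $u_i(\bfx_i) \geq \nicefrac{u_i(\I)}{n}$ for every agent~$i$. I would show that the forest rounding distorts each agent's bundle by at most ``one added good and one removed chore.'' Concretely, for each agent~$i$ and integer allocation $\pi$ in the support, set
\[
Y = \arg\max\{\, u_i(o) : o \in F_i \setminus \pi_i \,\} \cap \{\text{goods of } i\}, \qquad X = \arg\min\{\, u_i(o) : o \in F_i \cap \pi_i \,\} \cap \{\text{chores of } i\},
\]
where $F_i$ is the set of items on which $i$ is fractional in the relevant $\bfx^t$, dropping $X$ or $Y$ if the corresponding set is empty. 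A tree-induction on the component of $G(\bfx^t)$ containing~$i$ should show that this one-item swap recovers at least the utility $u_i(\bfx^t_i)$ that $i$ held fractionally, and averaging across the original decomposition gives $u_i((\pi_i \setminus X) \cup Y) \geq \nicefrac{u_i(\I)}{n}$.

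The principal obstacle I foresee is precisely this last step: in the goods-only setting treated by \citet{aziz2023best} it suffices to add a single good to recover the fractional share, but in the mixed setting each agent's rounded bundle can both omit a fractional good and absorb a fractional chore, so one must simultaneously bound both directions of error. The forest structure constrains the rounding error per agent to a single ``arm'' of the tree, but turning this structural bound into the clean PROP[1,1] guarantee requires carefully tracking, round by round in the peeling procedure, which agent bears the rounding surplus or deficit on each edge of the tree.
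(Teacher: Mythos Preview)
Your plan secures properties~1 and~2, but the argument for property~3 has a genuine gap. The cycle-elimination and leaf-peeling you describe are \emph{utility-oblivious}: they look only at the support graph $G(\bfx)$, never at the values $u_i(o)$. Nothing in that procedure prevents an integral allocation in the support from simultaneously depriving an agent of several fractional goods and saddling her with several fractional chores. Concretely, take two agents with identical utilities, three goods $g_1,g_2,g_3$ of value~$1$ each and three chores $c_1,c_2,c_3$ of value~$-1$ each, and the PO and EF fractional allocation $x_{i,o}=\nicefrac{1}{2}$ for all $i,o$. If you eliminate the cycles $1\!-\!g_j\!-\!2\!-\!c_j\!-\!1$ and recurse each time into the branch that hands $c_j$ to agent~$1$, you reach the integral allocation $\pi_1=\{c_1,c_2,c_3\}$, $\pi_2=\{g_1,g_2,g_3\}$ with positive weight. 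Here $u_1(\pi_1)=-3$, and removing one chore while adding one good lifts this only to $-1<0=\nicefrac{u_1(\I)}{2}$, so PROP[1,1] fails. Your assertion that ``the forest structure constrains the rounding error per agent to a single arm'' is not correct: an agent vertex can have arbitrary degree in the forest, and your peeling resolves each incident fractional edge independently.

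The paper closes this gap by invoking the decomposition lemma of \citet{budish2013designing} not merely with the singleton (support) constraints but with an additional \emph{utility-ordered} laminar family: for each agent~$i$ it imposes floor/ceiling constraints on every prefix $G_{i,\ell}=\{(i,g_{i,1}),\dots,(i,g_{i,\ell})\}$ of $i$'s goods sorted by decreasing value, and on every prefix $C_{i,\ell}$ of $i$'s chores sorted by increasing value. These prefix constraints force each integral allocation in the support to match, for every agent and every quality threshold, the fractional cumulative good-count and chore-count to within one. An Abel-summation (telescoping) computation then bounds the utility shortfall on goods by the value of one missing good and the utility overrun on chores by the cost of one extra chore, which is exactly PROP[1,1]. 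This prefix-constraint device is the missing ingredient; it is also what \citet{aziz2023best} actually use in the goods-only case, so ``adapting'' their argument means carrying these constraints over rather than replacing them by plain cycle-cancelling.
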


To prove the above lemma, we utilize Lemma~\ref{lem:decompostion:budis}, proven by \citet{budish2013designing}.\footnote{In fact, \citet{budish2013designing} showed a more general statement of Lemma~\ref{lem:decompostion:budis} for \emph{bihierarchical} constraints $\mathcal{H}$ where $\mathcal{H}$ is a disjoint union of $\mathcal{H}_1,\mathcal{H}_2 \subseteq 2^{N \times I}$ such that each $\mathcal{H}_j$ forms a laminar set family and one of $\mathcal{H}_j$ requires a fractional allocation to be valid. } 
Specifically, consider a family $\mathcal{H} \subseteq 2^{N \times I}$ of subsets of agent-item pairs. 
We assume that $\mathcal{H}$ forms a \emph{laminar set family}, namely, $S, S' \in \mathcal{H}$ implies either $S \subseteq S'$, $S' \subseteq S$, or $S \cap S' =\emptyset$. 
Given a fractional allocation $\bfx$, we write $x_S=\sum_{(i,o) \in S}x_{i,o}$ for each $S \in \mathcal{H}$ and let $\underline{q}_S, \overline{q}_S$ be non-negative real numbers for each $S \in \mathcal{H}$. Consider the following constraints on fractional allocations $\bfx$ with lower and upper quotas on the amount of fractions that can be allocated to the agent-item pairs in $S$: 
\begin{equation}\label{eq:H}
\underline{q}_S \leq x_S \leq \overline{q}_S~\mbox{\quad for each}~S \in \mathcal{H}.
\end{equation}

\begin{lemma}[\citet{budish2013designing}]\label{lem:decompostion:budis}
Suppose that $\mathcal{H}$ is a laminar set family and $\bfx^*$ is a fractional allocation  satisfying constraints~\eqref{eq:H}. Then, one can find in strongly polynomial-time a randomized allocation $(p^t,\pi^{t})_{t \in [h]}$ that implements $\bfx^*$ such that each allocation $\pi^{t}$ for $t \in [h]$ in the support satisfies~\eqref{eq:H}, namely:
\begin{equation*}
\underline{q}_S \leq \sum_{(i,o) \in S} \boldsymbol{1}[o \in \pi^t_i] \leq \overline{q}_S~\mbox{\quad for each}~S \in \mathcal{H}.
\end{equation*}
\end{lemma}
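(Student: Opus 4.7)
My plan is to apply the (bihierarchical extension of the) Budish decomposition lemma \ref{lem:decompostion:budis} with a carefully chosen family of laminar constraints so that every integral allocation appearing in the support automatically satisfies properties~2 and~3. Concretely, I will design two laminar families whose union is bihierarchical in the sense of the footnote: one family that enforces validity and the support condition, and a second one of per-agent "prefix" constraints on items sorted by utility that will yield PROP[1,1].

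The first family $\mathcal{H}_1$ contains the column sets $N\times\{o\}$ for each $o\in\I$ (with quotas $\underline{q}=\overline{q}=1$, enforcing that $\pi^t$ is an exhaustive allocation) together with the singletons $\{(i,o)\}$ for each pair with $x_{i,o}=0$ (with $\overline{q}=0$, enforcing property~2). These sets are pairwise disjoint or nested, so $\mathcal{H}_1$ is laminar. For the second family $\mathcal{H}_2$, for each agent $i$ fix an ordering $o_1^{(i)},\ldots,o_m^{(i)}$ of $\I$ by decreasing value of $u_i$, and let $S_k^{(i)}=\{i\}\times\{o_1^{(i)},\ldots,o_k^{(i)}\}$ for $k\in[m]$, with quotas $\underline{q}_{S_k^{(i)}}=\lfloor y_{i,k}\rfloor$ and $\overline{q}_{S_k^{(i)}}=\lceil y_{i,k}\rceil$, where $y_{i,k}=\sum_{j\le k}x_{i,o_j^{(i)}}$. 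The sets in $\mathcal{H}_2$ arising from different agents live in disjoint "rows" of $N\times\I$ and those of a single agent form a chain, so $\mathcal{H}_2$ is laminar. By construction $\bfx$ satisfies every quota in $\mathcal{H}=\mathcal{H}_1\cup\mathcal{H}_2$, so the bihierarchical version of Lemma~\ref{lem:decompostion:budis} yields a decomposition $(p^t,\pi^t)_{t\in[h]}$ with rational $p^t$ (rationality is inherent to Budish's combinatorial procedure once $\bfx$ has rational entries, which we may assume since PO+EF fractional allocations can be found by an LP with rational data) such that every $\pi^t$ is a valid exhaustive allocation, meets the support condition, and satisfies $|\pi^t_i\cap\{o_1^{(i)},\ldots,o_k^{(i)}\}|\in\{\lfloor y_{i,k}\rfloor,\lceil y_{i,k}\rceil\}$ for every $i$ and $k$.

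The hard part is deducing property~3, per-round PROP[1,1], from these prefix constraints. Since $\bfx$ is fractionally EF and utilities are additive, $\bfx$ is also fractionally PROP: $u_i(\bfx_i)\ge u_i(\I)/n$ for every $i$. For a fixed $\pi^t$ and agent $i$, Abel summation along the sorted sequence gives
\[
u_i(\pi^t_i)-u_i(\bfx_i)=\sum_{k=1}^{m}\bigl(u_i(o_k^{(i)})-u_i(o_{k+1}^{(i)})\bigr)(z_{i,k}-y_{i,k}),
\]
with $u_i(o_{m+1}^{(i)}):=0$ and $z_{i,k}:=|\pi^t_i\cap\{o_1^{(i)},\ldots,o_k^{(i)}\}|$; the prefix constraints give $|z_{i,k}-y_{i,k}|<1$. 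I then plan to show that whenever $u_i(\pi^t_i)<u_i(\I)/n$, there exist indices $k_1\le k_2$ with $o_{k_1}^{(i)}\notin\pi^t_i$ and $o_{k_2}^{(i)}\in\pi^t_i$ such that swapping them closes the deficit $u_i(\bfx_i)-u_i(\pi^t_i)$; concretely, taking $Y$ to be the single highest-ranked item of $i$ not in $\pi^t_i$ and $X$ to be the single lowest-ranked item of $i$ in $\pi^t_i$ witnesses PROP[1,1]. The bookkeeping here — producing these two indices from the "rounding errors" $z_{i,k}-y_{i,k}$ and verifying the inequality $u_i(Y)-u_i(X)\ge u_i(\bfx_i)-u_i(\pi^t_i)$, including edge cases when $\pi^t_i$ or $\I\setminus\pi^t_i$ is empty or when all items appear as chores to $i$ — is where the real work lies, but once done it gives the three claimed properties of the randomized allocation.
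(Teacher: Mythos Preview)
First, a labeling issue: Lemma~\ref{lem:decompostion:budis} is a result of \citet{budish2013designing} that the paper merely cites; there is no proof of it in the paper to compare against. Your write-up is really an attempt at Lemma~\ref{lem:decompostion:POEF}, which \emph{applies} the Budish lemma, so I compare your argument to the paper's proof of that lemma.

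The substantive difference is in the per-agent laminar family. The paper uses, for each agent $i$, \emph{two} chains: prefixes of goods sorted by decreasing $u_i$, and prefixes of chores sorted by \emph{increasing} $u_i$ (worst chore first). You use a single chain of all items sorted by decreasing $u_i$. This is not an innocuous simplification: on the chore side your prefixes control how many of the \emph{mildest} chores agent~$i$ receives, whereas the paper's control how many of the \emph{worst} chores she receives, and the latter is precisely what the Abel-summation bound needs.

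Concretely, your planned inequality $u_i(Y)-u_i(X)\ge u_i(\bfx_i)-u_i(\pi^t_i)$ (with $Y$ the highest-ranked missing item and $X$ the lowest-ranked present item) is false. Take an all-chores instance for agent~$i$ with $u_i=(-1,-10,-10,-10)$ and $x_i=(0.9,0.4,0.4,0.4)$, so the prefix sums are $y=(0.9,1.3,1.7,2.1)$. The bundle $\pi_i^t=\{o_2,o_3,o_4\}$ has prefix counts $z=(0,1,2,3)$, each in $\{\lfloor y_k\rfloor,\lceil y_k\rceil\}$, yet the deficit is $u_i(\bfx_i)-u_i(\pi_i^t)=-12.9-(-30)=17.1$, while $a_{k_1}-a_{k_2}=-1-(-10)=9$. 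For $n=2$ the fair share is $-15.5$, and the best one can do under PROP$[1,1]$ is remove one $-10$ chore, leaving utility $-20<-15.5$; so this bundle is not PROP$[1,1]$. The paper's worst-first chore prefixes exclude it (since $\tilde y_2=0.8$ forces at most one of the two worst chores into $\pi_i^t$), but your single chain does not. Hence the ``bookkeeping'' you flag as the hard step cannot be completed as outlined; the fix is exactly what the paper does---reverse the order on the chore side, i.e., run separate goods and chores chains.
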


Building up on this decomposition lemma, we show Lemma~\ref{lem:decompostion:POEF} in a similar manner to the proof of \citet{aziz2023best}. 

\begin{proof}[Proof of Lemma~\ref{lem:decompostion:POEF}]
Given a fractional allocation $\bfx^*$ satisfying PO and EF, let us define $\mathcal{H}$ as follows. 
We divide the items in $I$ into goods and chores for each agent. Specifically, for each agent $i \in I$, we let $G_i=\{\, o \in I\mid u_i(o) \geq 0\,\}$ and $C_i=\{\, o \in I\mid u_i(o) < 0\,\}$; set $m^+_i=|G_i|$ and $m^-_i=|C_i|$. 
We label the items in $G_i=\{g_{i,1},g_{i,2},\ldots,g_{i,m^+_i}\}$ so that each $g_{i,\ell}$ represents the $\ell$-th favorite subjective good for $i$, i.e., 
\[
u_i(g_{i,1}) \geq  u_i(g_{i,2}) \geq \cdots \geq u_i(g_{i,m^+_i}). 
\]
Similarly, we let $C_i=\{c_{i,1},c_{i,2},\ldots,c_{i,m^-_i}\}$ so that each $c_{i,\ell}$ represents the $\ell$-th least favorite subjective chore for $i$, i.e., 
\[
u_i(c_{i,1}) \leq  u_i(c_{i,2}) \leq \cdots \leq u_i(c_{i,m^-_i}). 
\]

For agent $i \in N$, each $\ell \in [m^+_i]$, we set $G_{i,\ell}=\{\, (i,g_{i,h}) \mid h \leq \ell \,\}$ that corresponds to the set of the first $\ell$-most preferred goods of agent $i$. Similarly, we let $C_{i,\ell}=\{\, (i,c_{i,h}) \mid h \leq \ell \,\}$ that corresponds to the set of the first $\ell$-least preferred chores of agent $i$. 
We define 
\begin{align*}
\mathcal{H}= &\{\, \{(i,o)\}\mid i \in N, o \in I\,\} \\
&\cup \{\, G_{i,\ell^+}, C_{i,\ell^-} \mid i \in N, \ell^+ \in [m^+_i], \ell^- \in [m^-_i]\,\}. 
\end{align*}
For each $S \in \mathcal{H}$, we set $\underline{q}_S=\lfloor x^*_S \rfloor$ and $\overline{q}_S=\lceil x^*_S \rceil$. 
The constraints imposed by the singleton sets $\{(i,o)\}$ require that each agent $i$ gets the full of $o$ (respectively, a zero-fraction of $o$) if $i$ gets item $o$ fully (respectively, none of item $o$) in $\bfx^*$. 
The constraints imposed by $G_{i,\ell}$ require that agent $i$ gets at least $\lfloor x^*_{G_{i,\ell}} \rfloor$ goods from the first $\ell$-most preferred goods; on the other hand, the constraints by $C_{i,\ell}$ require that agent $i$ gets at most $\lceil x^*_{C_{i,\ell}} \rceil$ chores from the first $\ell$-least preferred chores.

It is not difficult to verify that $\mathcal{H}$ is laminar. Hence, we can apply Lemma~\ref{lem:decompostion:budis} and obtain a randomized allocation $(p^t,\pi^{t})_{t \in [h]}$ that implements $\bfx^*$ and satisfies for each $t \in [h]$,
\begin{equation*}
\lfloor x^*_S \rfloor \leq \sum_{(i,o) \in S} \boldsymbol{1}[o \in \pi^t_i] \leq \lceil x^*_S \rceil~\mbox{\quad for each}~S \in \mathcal{H}.
\end{equation*}
Here, we can ensure the following properties: 
\begin{enumerate}
\item $p_1,p_2,\ldots,p_{h} \in \mathbb{Q}_+$, and
\item $\pi^{t}_i \subseteq \{\, o \in I \mid x^*_{i,o}>0\,\}$ for each $i \in N$ and $t \in [h]$, 
\end{enumerate}   
where the first condition holds because the randomized allocation can be obtained in strongly polynomial time and the second condition holds due to the singleton constraints in $\mathcal{H}$ that each item cannot be allocated to an agent $i$ who gets none of the item under the fractional allocation. 
It remains to show that $\pi^{t}$ is PROP[1,1] for each $t \in [h]$.

Now fix any agent $i \in N$ and $t \in [h]$. 
We define $U^+_i=\sum_{g \in G_i} x^*_{i,g}u_i(g)$ and $U^-_i=\sum_{c \in C_i} x^*_{i,c}u_i(c)$. Here, $U^+_i$ represents the total utility agent $i$ receives from her subjective goods under $\bfx^*$ and $U^-_i$ that from subjective chores. Observe that $U^+_i + U^-_i \geq \nicefrac{u_i(I)}{n}$ since $\bfx^*$ is envy-free (and envy-freeness implies proportionality in this setting as well). Hence, in order to establish PROP[1,1], it suffices to show the following: 
\begin{itemize}
\item[$($i$)$] if $u_i( \pi^t_i \cap G_i) < U^+_i$, there exists an item $g \in I \setminus \pi^t_i$ such that $u_i( (\pi^t_i \cap G_i) \cup \{g\}) \geq U^+_i$; and
\item[$($ii$)$] if $u_i( \pi^t_i \cap C_i) < U^-_i$, there exists an item $c \in \pi^t_i$ such that $u_i( (\pi^t_i \cap C_i) \setminus \{c\}) \geq U^-_i$. 
\end{itemize}

For ease of presentation, define a matrix $\bfy=(\bfy_1,\bfy_2,\ldots,\bfy_n) \in \{0,1\}^{N \times I}$ that corresponds to an allocation $\pi^t$, where we set $y_{i,o}=1$ if $o \in \pi^t$ and $y_{i,o}=0$ otherwise. 
Recall that for each $S \in \mathcal{H}$, $y_S=\sum_{(i,o) \in S}y_{i,o}$ represents the number of agent-item pairs $(i,o)$ such that $i$ is allocated to $o$. Since $\bfy$ satisfies \eqref{eq:H}, we have that  $\lfloor x^*_{G_{i,\ell}} \rfloor -y_{G_{i,\ell}} \leq 0$ and $ y_{G_{i,\ell}}  -  \lceil  x^*_{G_{i,\ell}} \rceil\leq 0$ for every $\ell \in [m^+_i]$.

First, we show $($i$)$. Suppose $u_i( \pi^t_i \cap G_i) < U^+_i$. Recall that agents can only get items for which they receive a positive fraction in $\bfx^*$, meaning that $\pi^t_i \cap G_i \subseteq \{\, o \in G_i \mid x^*_{i,o}>0\,\}$. Thus, this inclusion relation must be proper, i.e., there is an item $g_{i,\ell^+}$ such that $x^*_{i,g_{i,\ell^+}}>0$ but $g_{i,\ell^+} \not \in \pi^t_i$. 
Define $\ell^+ \in [m^+_i]$ to be the smallest index of such an item in $G_i$. 
We create a dummy item $g_{i,m^+_i+1}$ where $u_i(g_{i,m^+_i+1})=0$.
Then, 
\begin{align*}
&U^+_i - u_i( \pi^t_i \cap G_i)\\
&=\sum^{m^+_i}_{\ell=1}  u_i(g_{i,\ell}) (x^*_{i,g_{i,\ell}} - y_{i,g_{i,\ell}})\\
&=\sum^{m^+_i}_{\ell=1}  [u_i(g_{i,\ell}) + \sum^{m^+_i}_{q=\ell+1} (u_i(g_{i,q}) - u_i(g_{i,q}))] (x^*_{i,g_{i,\ell}} - y_{i,g_{i,\ell}})\\
&= \sum^{m^+_i}_{\ell=1} (u_i(g_{i,\ell}) - u_i(g_{i,\ell+1}) ) (x^*_{G_{i,\ell}}-y_{G_{i,\ell}})\\
&\leq \sum^{m^+_i}_{\ell=\ell^+} (u_i(g_{i,\ell}) - u_i(g_{i,\ell+1}) ) (x^*_{G_{i,\ell}}-y_{G_{i,\ell}}) \\
&\leq \sum^{m^+_i}_{\ell=\ell^+} (u_i(g_{i,\ell}) - u_i(g_{i,\ell+1}) ) \cdot 1 \\
&\leq u_i(g_{i,\ell^+}). 
\end{align*}
The third transition follows using simple algebra.\footnote{For example, when $m^+_i=3$,
we have 
$
(u(1)-u(2)+u(2)-u(3)+u(3))(x^*_{1}-y_{1})
+(u(2)-u(3)+u(3)) (x^*_{2} - y_{2})+ u(3) (x^*_{3}-y_{3})
= (u(1)-u(2))(x^*_{1}-y_{1})
+ (u(2)-u(3))(x^*_{1}+x^*_2 -y_{1}-y_{2}) 
+ (u(3)-0 )(x^*_{1}+x^*_{2}+x^*_{3} -y_{1}-y_{2}-y_{3})$,
where we write $u(\ell)=u_i(g_{i,\ell})$, $x^*_{\ell}=x^*_{i,g_{i,\ell}}$, and $y_{\ell}=y_{i,g_{i,\ell}}$ for each $\ell=1,2,3$.}
The fourth transition holds since $u_i(g_{i,\ell}) - u_i(g_{i,\ell+1}) \geq 0$ for every $\ell \in [m^+_i]$ and $x^*_{G_{i,\ell}}-y_{G_{i,\ell}}=x^*_{G_{i,\ell}}-|G_{i,\ell}| \leq 0$ for every $\ell \leq \ell^+-1$. The fith transition holds since $x^*_{G_{i,\ell}}-y_{G_{i,\ell}} \leq \lfloor x^*_{G_{i,\ell}} \rfloor +1 -y_{G_{i,\ell}} \leq 1$ for every $\ell \in [m^+_i]$. Hence, $($i$)$ holds.

Next, we show $($ii$)$. Suppose $u_i( \pi^t_i \cap C_i) < U^-_i$. Then, there is an item $c_{i,\ell^-}$ such that $x^*_{i,g_{i,\ell^-}}>0$ and $c_{i,\ell^-} \in \pi^t_i$. Define $\ell^- \in [m^-_i]$ to be the smallest index of such an item in $C_i$. Again, we create a dummy item $c_{i,m^-_i+1}$ where $u_i(c_{i,m^-_i+1})=0$. 
Then,

\begin{align*}
&u_i( \pi^t_i \cap C_i)-U^-_i\\
&=\sum^{m^-_i}_{\ell=1}  u_i(c_{i,\ell}) (y_{i,c_{i,\ell}}-x^*_{i,c_{i,\ell}})\\
&=\sum^{m^-_i}_{\ell=1}  [u_i(c_{i,\ell}) + \sum^{m^-_i}_{q=\ell+1} (u_i(c_{i,q}) - u_i(c_{i,q}))] (y_{i,c_{i,\ell}}-x^*_{i,c_{i,\ell}})\\
&= \sum^{m^-_i}_{\ell=1} (u_i(c_{i,\ell}) - u_i(c_{i,\ell+1}) ) (y_{C_{i,\ell}}-x^*_{C_{i,\ell}})\\
&\geq \sum^{m^-_i}_{\ell=\ell^-} (u_i(c_{i,\ell}) - u_i(c_{i,\ell+1}) )  (y_{C_{i,\ell}}-x^*_{C_{i,\ell}}) \\
&\geq \sum^{m^-_i}_{\ell=\ell^-} (u_i(c_{i,\ell}) - u_i(c_{i,\ell+1}) ) \cdot 1 \\
&\geq u_i(c_{i,\ell^-}). 
\end{align*}

The fourth transition holds since $u_i(c_{i,\ell}) - u_i(c_{i,\ell+1}) \leq 0$ for every $\ell \in [m^-_i]$ and $y_{C_{i,\ell}}-x^*_{C_{i,\ell}} = - x^*_{C_{i,\ell}} \leq 0$ for every $\ell \leq \ell^- -1$. The fifth transition holds since $y_{G_{i,\ell}}-x^*_{G_{i,\ell}} \leq y_{G_{i,\ell}}  -  (\lceil  x^*_{G_{i,\ell}} \rceil -1) \leq 1$ for every $\ell \in [m^+_i]$. Hence, $($ii$)$ holds too.

Finally, if $u_i( \pi^t_i) \geq U^+_i + U^-_i$, we have that $u_i( \pi^t_i) \geq \nicefrac{\sat_i(\I)}{n}$. 
If, on the other hand, $u_i( \pi^t_i) < U^+_i + U^-_i$, we have either 
\begin{itemize}
\item $u_i( \pi^t_i \cap G_i) < U^+_i$ and $u_i( \pi^t_i \cap C_i) < U^-_i$, 
\item $u_i( \pi^t_i \cap G_i) \geq U^+_i$ and $u_i( \pi^t_i \cap C_i) < U^-_i$, or 
\item $u_i( \pi^t_i \cap G_i) < U^+_i$ and $u_i( \pi^t_i \cap C_i) \geq U^-_i$.
\end{itemize}
Together with $($i$)$ and $($ii$)$, this implies that $\sat_i((\pi_i \setminus X)\cup Y) \geq \nicefrac{\sat_i(\I)}{n}$ for some $X \subseteq \pi_i$ and $Y \subseteq I \setminus \pi_i$ with $|X|,|Y| \leq 1$. Therefore, PROP[1,1] holds for $\bfy$, concluding the proof.
\end{proof}

Next, we show that a randomized allocation that implements PO and EF fractional allocation can be translated into a repeated allocation with the same fairness and efficiency guarantees. 

\begin{lemma}\label{lem:fractionalPOEF}
    Let $\bfx$ be a PO and EF fractional allocation and $(p^t,\pi^{t})_{t \in [h]}$ a randomized allocation that implements $\bfx$ with $p^t \in \mathbb{Q}$ for each $t \in [h]$. Then, the repeated translation of $(p^t,\pi^{t})_{t \in [h]}$ is PO and EF.\label{lemma:translation-fractional}
\end{lemma}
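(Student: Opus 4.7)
The plan is to exploit the identity that, in the repeated translation, the total number of copies of item $o$ that agent $i$ receives across all $k$ rounds equals exactly $k \cdot x_{i,o}$. Once this is established, both EF and PO transfer from $\bfx$ to $\reppi{k}$ essentially by scaling.

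First, I would unpack the counting. For each $i \in N$ and $o \in I$, the number of rounds in $\reppi{k}$ at which agent $i$ receives (a copy of) $o$ is $\sum_{t \in [h]} \ell_t \cdot \mathbf{1}[o \in \pi^t_i]$. Since $(p^t, \pi^t)_{t \in [h]}$ implements $\bfx$, we have $x_{i,o} = \sum_t p_t \cdot \mathbf{1}[o \in \pi^t_i] = \sum_t (\ell_t/k) \cdot \mathbf{1}[o \in \pi^t_i]$, so this count is exactly $k \cdot x_{i,o}$. By additivity of utilities, it follows that $u_i(\pi^{\cup k}_j) = k \cdot \sum_{o \in I} u_i(o) \cdot x_{j,o}$ for every pair $i,j \in N$.

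For envy-freeness, since $\bfx$ is EF in the fractional sense, $\sum_o u_i(o) x_{i,o} \geq \sum_o u_i(o) x_{j,o}$ for all $i,j$; multiplying both sides by the positive integer $k$ yields $u_i(\pi^{\cup k}_i) \geq u_i(\pi^{\cup k}_j)$, so $\reppi{k}$ is envy-free overall.

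For Pareto-optimality, I would argue by contradiction. Suppose some $\oreppi{k}$ Pareto-dominates $\reppi{k}$ overall. Define a fractional allocation $\bfy$ by setting $y_{i,o} = \frac{1}{k} \cdot |\{\,\ell \in [k] \mid o \in \rho^\ell_i\,\}|$. Since in each round each copy of $o$ is assigned to exactly one agent, $\sum_{i \in N} y_{i,o} = 1$ and $y_{i,o} \in [0,1]$, so $\bfy$ is a valid fractional allocation. By the same counting identity, $\sum_{o} u_i(o) y_{i,o} = \frac{1}{k} u_i(\rho^{\cup k}_i)$ for every $i$, and analogously $\frac{1}{k} u_i(\pi^{\cup k}_i) = \sum_o u_i(o) x_{i,o}$. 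The Pareto domination $u_i(\rho^{\cup k}_i) \geq u_i(\pi^{\cup k}_i)$ for all $i$ and strict for some $j$, after dividing by $k$, becomes $\sum_o u_i(o) y_{i,o} \geq \sum_o u_i(o) x_{i,o}$ for all $i$ and strict for $j$, contradicting the Pareto-optimality of $\bfx$. Hence $\reppi{k}$ is PO overall.

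I do not expect a real obstacle: the whole argument is a bookkeeping correspondence between the repeated translation and the fractional allocation, with utilities scaling by the factor $k$. The only point worth stating carefully is that \emph{any} $k$-round allocation induces a valid fractional allocation via the normalized count $y_{i,o}$, which is what allows a putative discrete Pareto-improvement to be converted into a fractional one.
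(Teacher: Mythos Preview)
Your proposal is correct and follows essentially the same approach as the paper: establish that the overall utilities in the repeated translation are exactly $k$ times the fractional utilities, then transfer EF by scaling and PO by constructing the fractional counterpart of a putative Pareto improvement. If anything, you are slightly more careful than the paper in making explicit that the scaling identity holds for $u_i(\pi^{\cup k}_j)$ with $j\neq i$ and in verifying that the normalized counts $y_{i,o}$ form a valid fractional allocation.
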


\begin{proof}[Proof]
    Consider a PO and EF fractional allocation $\bfx$ and a randomized allocation $(p^t,\pi^{t})_{t \in [h]}$ that implements $\bfx$ with $p^t \in \mathbb{Q}$ for every $t \in [h]$. 
    Then, there exists $k\in\mathbb{N}$ such that for all $t \in [h]$, $p^t=\nicefrac{\ell_t}{k}$ (for some $\ell_t\in\mathbb{N}$). Let $\oreppi{k}$ be the repeated translation of $(p^t,\pi^{t})_{t \in [h]}$. Consider any agent $i\in\N$. Then, we have the following:

    \begin{align*}
        u_i(\bfx_i) &= \sum_{o\in\I}u_i(o) x_{i,o} \\
        &= \sum_{o\in\I}u_i(o) \sum^{h}_{t=1} p^t {\boldsymbol{1}}[o \in \pi^t_i]\\
        &= \sum_{o\in\I}u_i(o) \sum^{h}_{t=1} \nicefrac{\ell_t}{k} {\boldsymbol{1}}[o \in \pi^t_i]\\
        & = \nicefrac{1}{k} \sum_{o\in\I}u_i(o) \sum^{h}_{t=1} \ell_t {\boldsymbol{1}}[o \in \pi^t_i]\\
        & = \nicefrac{1}{k}\sum_{o\in\I}u_i(o) |\{r\in[k]\colon o\in\rho^r_i \}|=\nicefrac{1}{k} \cdot  u_i(\rho^{\cup k}_i).
    \end{align*}
    Therefore, the utilities yielded by $\bfx$ and $\oreppi{k}$ are the same, up to a (common for all agents) rescaling. Therefore, if $\bfx$ is EF, then so is $\oreppi{k}$. 

    Next, suppose toward a contradiction that $\bfx$ is PO, but $\oreppi{k}$ is not. Then, the latter is Pareto-dominated by some $\areppi{k}$. Consider the \emph{fractional translation} $\bfy=(\bfy_1,\bfy_2,\ldots,\bfy_n)$ of $\areppi{k}$. Namely, for every $i\in\N$ and $o\in\I$, $y_{i,o}=\nicefrac{q}{k}$ (where $q$ is the number of times $i$ receives $o$ in $\areppi{k}$). Similar to the above, the utilities yielded by $\bfy$ and $\areppi{k}$ are the same up to a (common for all agents) rescaling. Thus, the fractional allocation $\bfy$ Pareto-dominates $\bfx$, which is a contradiction. 
\end{proof}

We are now ready to prove Theorem~\ref{thm:EF-PO-n-agents-fractional}.

\begin{proof}[Proof of Theorem~\ref{thm:EF-PO-n-agents-fractional}]
    A \emph{fractional allocation} is a vector $\bfx=(\bfx_1,\ldots,\bfx_n)$ where each $\bfx_i=(x_{i,o})_{o \in I} \in \mathbb{Q}^m_{+}$ and, for all $o\in\I$, $\sum_{i\in\N}x_{i,o}=1$. Given a utility vector $\boldsymbol{u}$, the utility of agent~$i$ for allocation $\bfx$ is defined as $u_i(\bfx_i)=\sum_{o\in\I}x_{i,o} u_i(o)$. The notions of envy-freeness and Pareto-optimality (w.r.t. all other fractional allocations) naturally translate to this setting. 
Next, a randomized allocation $(p^t,\pi^{t})_{t \in [h]}$ is said to \emph{implement} a fractional allocation $\bfx$ if
\[x_{i,o}=\sum^{h}_{t=1} p_t {\boldsymbol{1}}[o \in \pi^t_i]\qquad\text{for each }i \in N\text{ and }o \in I.\]
Here, ${\boldsymbol{1}}[o \in \pi^t_i]$ is an indicator function which takes value $1$ if $o \in \pi^h_i$, and $0$ otherwise.

When all agents' utilities are rational-valued, a PO and EF fractional allocation always exists~\citep{BMSY17a,ChaudhuryMOR}. 
    Moreover, by Lemma~\ref{lem:decompostion:POEF}, we can show that any PO and EF fractional allocation $\bfx$ admits a randomized allocation $(p^t,\pi^{t})_{t \in [h]}$ that implements $\bfx$ with these properties: 
\begin{enumerate}
\item $p_1,p_2,\ldots,p_{h} \in \mathbb{Q}_+$,  
\item $\pi^{t}_i \subseteq \{ o \in I \mid x_{i,o}>0\}$ for each $i \in N$ and $t \in [h]$, and 
\item $\pi^{t}$ is PROP[1,1] for each $t \in [h]$. 
\end{enumerate}    

Finally, by Lemma~\ref{lem:fractionalPOEF}, the repeated translation of $(p^t,\pi^{t})_{t \in [h]}$ is PO and EF overall. Moreover, since $\pi^t$ is PROP[1,1] for all $t\in[h]$, this translation is per-round PROP[1,1].
\end{proof}

Note that due to Theorem~\ref{thm:no-EF-PO-with-n-greater-than-2}, we know that this result is impossible for a fixed number of rounds, and thus highlights the difference between fixed and variable~$k$.

\section{Conclusion}\label{sec:conclusion}

We have seen that in our model of repeated allocations the
(necessary) trade-off between fairness and efficiency is much more favorable 
than in the standard setting without repetitions.
In the case of two agents, we presented some algorithms
guaranteeing overall envy-freeness and
Pareto-optimality (as well as per-round approximate
envy-freeness) for any even number of rounds.

As some of our algorithms require exponential time, it would be of interest
to study the computational complexity of related decision problems, investigating 
whether and where polynomial-time results are obtainable.

Our $n$-agent algorithm yields slightly weaker guarantees (proportionality and Pareto-optimality),
which are still an improvement over the one-shot setting.
It remains for future work to determine whether this result can be strengthened by
additional per-round guarantees, as for the 2-agent case.

When the number of rounds $k$ can be chosen freely, we have shown that (for any number of agents) an envy-free, Pareto-optimal
and per-round PROP[1,1] allocation always exists. We have done so by establishing a connection between our setting
and the probabilistic and divisible settings. However, our approach gives no guarantee on the number of rounds it requires.
As future work, it would be interesting to investigate the complexity of finding the smallest $k$
for which an envy-free and Pareto-optimal allocation exists.


\section*{Acknowledgments}
This paper developed from ideas discussed at the Dagstuhl Seminar 22271 \emph{Algorithms for Participatory Democracy}. Martin Lackner was supported by the Austrian Science Fund (FWF): P31890. Oliviero Nardi was supported by the European Union's Horizon 2020 research and innovation programme under grant agreement number \includegraphics[height=\fontcharht\font`\B]{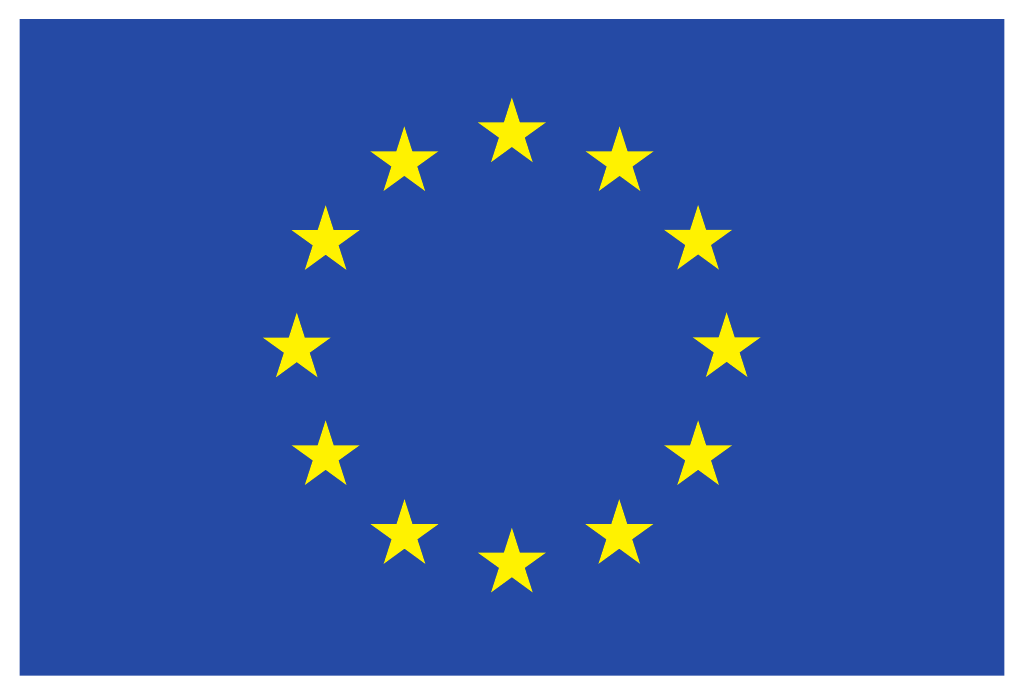}101034440, and by the Vienna Science and Technology Fund (WWTF) through project ICT19-065. Ayumi Igarashi was supported by JST PRESTO under grant number JPMJPR20C1.

\balance
\bibliographystyle{ACM-Reference-Format}

\end{document}